\documentclass[runningheads]{llncs}

\usepackage[T1]{fontenc}

\usepackage{cite}
\usepackage{booktabs} 
\usepackage{amsmath}
\usepackage{hyperref}
\usepackage{amssymb}
\usepackage{breqn}
\usepackage{algorithm}
\usepackage{algpseudocode}

\newtheorem{assumption}{Assumption}

\usepackage{pgffor}
\usepackage{pgfplots}

\DeclareMathOperator*{\argmaxA}{arg\,max}
\algnewcommand\algorithmicforeach{\textbf{for each}}
\algdef{S}[FOR]{ForEach}[1]{\algorithmicforeach\ #1\ \algorithmicdo}

\renewcommand\algorithmicdo{}

\title{On Stability  and  Learning of Competitive Equilibrium in Generalized Fisher Market Models: A Variational Inequality Approach}
\titlerunning{On Stability  and  Learning of CE in Generalized Fisher Market Models}


\author{Mandar Datar\inst{1}  }
\authorrunning{Mandar Datar}
%
\institute{CEA-Leti, Universite Grenoble Alpes, F-38000 Grenoble, France\\
\email{mandar.datar@cea.fr}
 \footnote{Author was also associated with Inria Sophia Antipolis and the University of Avignon, France, during the time of this work}
}


\begin{document}


\maketitle

\begin{abstract}
In this work, we study a generalized Fisher market model that incorporates social influence. In this extended model, a buyer's utility depends not only on their own resource allocation but also on the allocations received by their competitors. We propose a novel competitive equilibrium formulation for this generalized Fisher market using a variational inequality approach. This framework effectively captures competitive equilibrium in markets that extend beyond the traditional assumption of homogeneous utility functions. We analyze key structural properties of the proposed variational inequality problem, including monotonicity, stability, and uniqueness. Additionally, we present two decentralized learning algorithms for buyers to achieve competitive equilibrium: a two-timescale stochastic approximation-based tâtonnement method and a trading-post mechanism-based learning method. Finally, we validate the proposed algorithms through numerical simulations.
\end{abstract}



\section{Introduction}
Markets have historically functioned as fundamental mechanisms for allocating resources since the dawn of human civilization. From meticulously recorded commodity prices in ancient Babylon to modern-day
global exchanges, markets facilitate the exchange of goods through intricate pricing systems.
The
fundamental principles governing these exchanges, encapsulated in the notion of equilibrium, have been the focal point of economic inquiry for centuries. In the late 19th century, the French economist Léon Walras established the foundations of modern market theory with the introduction of the concept of a Walrasian equilibrium, also known as a competitive equilibrium (CE) \cite{walras1896elements}. In informal terms, an equilibrium can be defined as a set of prices at which the supply and demand of all goods in the market are in balance.

\par Despite Walras’ foundational contributions, the precise conditions
guaranteeing the existence of such prices remained unresolved until Arrow's groundbreaking work \cite{arrow1954existence}. By leveraging Kakutani’s fixed point theorem, Arrow and Debreu not only settled
the question of existence for a broad class of economic models but also established a fundamental basis for what is now known as general equilibrium theory \cite{McKenzie_CE}.  
However, despite the significance of their result, the non-constructive nature of the proof leaves the challenge of computing equilibrium prices largely unaddressed. 
Indeed, the pursuit of computing equilibrium prices has been a long-standing and significant area of research within the field of economics. Walras pioneered a decentralized price-adjustment process termed \textit{tâtonnement} aimed at reflecting real market dynamics and conjecturing its convergence to equilibrium prices.
 Initially, optimism surrounding this concept was reinforced by Arrow et al. in \cite{arrow1958stability,arrow1959stability}, who illustrated convergence in continuous tâtonnement procedures under specific market conditions, notably the weak gross substitutes (WGS) property. However, the influential counterexample presented in Scarff's study \cite{Scarf_instability} undermined the assumption of universal convergence, underscoring the complex challenges inherent in equilibrium computation. 
\par 
The study of computational complexity in economic settings has provided profound insights into the challenges inherent in finding equilibrium solutions. It has been established that the task of computing even an approximate competitive equilibrium is PPAD-hard \cite{chen2009settling_PPAD}, thereby reinforcing the notion that equilibrium attainment in economic systems is inherently challenging. However, amidst these challenges, Fisher markets \cite{brainard2005compute} have emerged as a notable exception. In contrast to the broader Arrow-Debreu framework, Fisher markets represent a specialized context where competitive equilibria can be efficiently computed under certain conditions. In Fisher markets, there are no firms; instead, buyers possess a single type of commodity, which serves either as an artificial currency or as a fixed endowment defining their budget constraints within the market.
\par Eisenberg and Gale, in their studies \cite{eisenberg1959consensus}\cite{eisenberg1961aggregation}, and the subsequent generalization of their work by Jain et al. in  \cite{jain2007eisenberg}, demonstrated that if the utilities of buyers in the market are continuous, concave and homogeneous of degree one\footnote{A function is called as a homogeneous function of any degree ‘k’ if; when each of its elements is
multiplied by any number $t > 0$; then the value of the function is multiplied by $t^{k}$ 
.}(CCH), the market equilibrium can be determined by solving a convex optimization problem, commonly referred to as the Eisenberg-Gale (EG) program. While the EG program offers a centralized solution for finding equilibrium, it does not accurately represent real-world markets where agents interact with each other. Consequently, the algorithmic game theory community has been keen on developing algorithms that more realistically describe markets and their equilibrium concepts. Over the past two decades, considerable effort has been directed towards designing polynomial-time algorithms that enable buyers to compute competitive equilibria in Fisher markets in a decentralized manner. For example,\cite{Algo2_birnbaum2011distributed},\cite{Algo3_cheung2013tatonnement},\cite{Algo4_avigdor2014convergence},\cite{Algo5_nesterov2018computation},\cite{Algo6_Tracing_equilibrium},\cite{Algo7_cheung2018dynamics},\cite{Algo9_Proportional_dynamics_in_exchange},\cite{Algo10_Greed_leads_to_chaos},\cite{Algo11_goktas2023t},\cite{Algo12_nan2023fast}. Furthermore, recent studies, including \cite{datar2023fisher}, \cite{ICC_Naresh}, and \cite{Fisher_Fog}, have highlighted practical applications of the Fisher market model, demonstrating its growing relevance in real-world contexts. However, most of these approaches rely on the  EG-program and are thus limited to CCH markets.
\par In recent years, there has been a growing interest in generalizing the Fisher market model through the introduction of several innovative variants and the development of the corresponding computational methods. Gao et al. in \cite{gao2023infinite} introduced the concept of infinite-dimensional Fisher markets, expanding the traditional Fisher market model to encompass equilibrium for a continuum of items. In \cite{gao2021online}, Gao et al. and in \cite{jalota2022stochastic}, Jalota et al. studied an online variant of the Fisher market setting, in which users arrive sequentially with privately known utilities and budgets. Zhao et al. in \cite{zhao2023fisher} introduced the Fisher market with social influence, incorporating social dynamics into the traditional Fisher market model. This variant considers how buyers' preferences and decisions are shaped by the behavior and choices of other market participants. Prior to this, Datar et al. in \cite{Strategic_Resource_Pricing} explored competition in wireless communication markets, implicitly linking their analysis to a Fisher market with social influence, though without formalizing it. Despite these advancements, a key limitation of these extended models is that they are all rooted in the EG-program and are typically constrained to cases where buyers have CCH utility functions.
\par In this work, we aim to transcend these limitations by investigating methods to characterize market equilibrium in Fisher markets beyond homogeneous utility functions and by developing algorithms to achieve these equilibria.
To achieve this objective, we revisit the foundational work of Arrow and Debreu \cite{debreu1952social}, who pioneered the concept of generalized Nash equilibrium problem (GNEP)\footnote{ In their seminal work, Arrow and Debreu introduced the concept now widely recognized as the generalized Nash equilibrium problem (GNEP), originally termed the social equilibrium problem. Over time, this concept has been identified by various names, such as pseudo-game, equilibrium programming, coupled constraint equilibrium problem, and abstract economy, depending on its application domain. In this paper, we adopt the term GNEP to denote this problem due to its broad applicability across different contexts.} or games to demonstrate the existence of competitive equilibrium in a general economy model. We begin by modeling the Fisher market as a GNEP, thereby demonstrating the existence of a commutative equilibrium.  However, solving a GNEP is inherently complex. Thus, to make it computationally tractable, we reframe it as a
variational inequality problem. This reformulation simplifies the problem by focusing on the Karush-Kuhn-Tucker (KKT) conditions at equilibrium, which can be expressed as variational inequalities. This approach allows us to develop polynomial-time algorithms capable of efficiently computing the equilibrium.

\section{Our Contributions} We 
model the market equilibrium problem in the generalized Fisher market as the generalized Nash equilibrium problem and develop a novel variational inequality (VI) problem based on the KKT conditions derived from the best responses of buyers at the Nash equilibrium (NE) of the (GNEP). Our analysis demonstrates that the solution to the proposed (VI) problem delineates the competitive equilibrium of the Fisher market. Furthermore, we investigate key properties of variational equilibrium to the VI problem, including its monotonicity, stability, and uniqueness.
\par 
We establish a connection between the variational inequality problem and the prior literature on Fisher markets where buyers exhibit CCH utility functions. Specifically, we demonstrate that the well-known Eisenberg-Gale optimization program in non-social-influential Fisher markets and the buyers-auctioneer game in social-influential Fisher markets \cite{zhao2023fisher} are special cases of the variational inequality problem.
\par We present two decentralized algorithms for computing the competitive equilibrium in Fisher markets, employing two well-known approaches: the \textit{tâtonnement} process and the buyer-centric trading post mechanism\footnote{ The same mechanism has been called by different names in different application domains, for example, the Kelly mechanism \cite{kelly1997charging} in computer networks, and the proportional share scheme by \cite{feldman2008proportional}} \cite{tradingpost}. Our first algorithm introduces a novel variation compared to the traditional \textit{tâtonnement} approach by integrating the concept of two-time-scale stochastic approximation techniques. For the second algorithm, our analysis demonstrates that the trading post-mechanism-based learning scheme closely mirrors the discrete replicator.
Leveraging stochastic approximation techniques, we establish that the proposed algorithm converges effectively in polynomial time. Finally, to validate the proposed algorithms, we conducted numerical simulations aimed at computing the competitive equilibrium in a Fisher market scenario where buyers exhibit non-homogeneous utility functions.

\section{PRELIMINARIES}
\subsection{A Generalized Nash Equilibrium Problem (GNEP) } \label{Preleminier_append}
A Generalized Nash Equilibrium Problem or Game is characterized by a tuple\\ $\mathcal{G} := (\mathcal{N}, (\mathcal{A}_n)_{n \in \mathcal{N}}, (\mathcal{X}_n)_{n \in \mathcal{N}}, (\phi_n)_{n \in \mathcal{N}})$. Here, $\mathcal{N}$ represents a set of players $\left \{  1, \dots, N  \right \}$, where each player $n \in \mathcal{N}$ governs their action $\mathbf{a}_n \in \mathcal{A}_n \subseteq \mathbb{R}^{d_n}$. The joint action space of the players, denoted as $\mathcal{A} = \prod_{n \in \mathcal{N}} \mathcal{A}_n$, encompasses all possible combinations of individual actions. Each player $n$ is associated with an objective function $\phi_n: \mathcal{A} \rightarrow \mathbb{R}$, which is concave and continuous in $\mathbf{a}_{n}$ and relies on both their own action $\mathbf{a}_{n}$ and the actions $\mathbf{a}_{-n}$ of all other players. This dependence is represented as $\phi_n(\mathbf{a}_n, \mathbf{a}_{-n})$. Given a particular $ \mathbf{a}_{-n} \in \mathcal{A}_{-n}=\prod_{m \in \mathcal{N},  m \neq n} \mathcal{A}_{m} $, each player $n$ endeavors to maximize their objective function by selecting a feasible action such that $\mathbf{a}_n \in \mathcal{X}_{n}(\mathbf{a}_{-n}) \subseteq \mathcal{A}_n$. Here, the feasible action space or the strategy space of each player depends on the actions of others. Where $\mathcal{X}_{n}:\mathcal{A}_{-n}\rightrightarrows \mathcal{A}_{n}$ is a non empty, continuous, convex and  compact map. A generalized Nash equilibrium problem (GNEP) involves $N$ constrained optimization problems. In essence, for each player $n$, it tackles the optimization problem $Q_n(\mathbf{a}_{-n})$, 
\begin{equation*}
 \boxed{  
\begin{aligned}
Q_{n}(\mathbf{a}_{-n})\hspace{20pt}& \underset{\mathbf{a}_{n}}{\text{maximize}}
&&\phi_{n}\left(\mathbf{a}_{n},\mathbf{a}_{-n}\right),\\
& \text{subject to}
&& \mathbf{a}_{n}\in \mathcal{X}_{n}(\mathbf{a}_{-n}) .
\end{aligned}}
\end{equation*}
The generalized Nash equilibrium \( \mathbf{a}^* = (\mathbf{a}^*_1, \ldots, \mathbf{a}^*_N) \) in the generalized game is a state where no player \( n \) desires to unilaterally deviate from their strategy in the equilibrium profile \( \mathbf{a}^* \). Moreover, this equilibrium \( \mathbf{a}^* \) must satisfy the constraints of each agent \( n \in \mathcal{N} \), denoted as \( \mathbf{a}_n^* \in \mathcal{X}_n(\mathbf{a}_{-n}^{*}) \).
\begin{definition}[Genralized Nash equilibrium (GNE)] 
A strategy profile $(\mathbf{a}^{*})$ is a generalized Nash equilibrium (GNE) of the game $\mathcal{G}$ if $a^{*}_n\in SOL(Q_{n}(\mathbf{a}^*_{-n}))\;$ for all  $n \in \mathcal{N}$, \emph{i.e.,}  $\phi_{n}(\mathbf{a}^{*}_{n},\mathbf{a}^{*}_{-n})\geq \phi_{n}(\mathbf{a}_{n},\mathbf{a}^{*}_{-n})$ for all $\mathbf{a}_{n}\in \mathcal{X}_{n}(\mathbf{a}^{*}_{-n})$ and all $n\in \mathcal{N}.$
\end{definition}
In the above definition $SOL(Q_{n}(.))$ represents the solution of the optimization problem
$Q_{n}(.)$. If in a game $\mathcal{G}$, the objective function of each player $n\in \mathcal{N}$ $\phi_{n}()$ is concave and the feasible action set $\mathcal{X}_{n}(\mathbf{a}_{-n})$ is convex and compact, then the existence of GNE is guaranteed by the work of \cite{arrow1954existence}  
\begin{definition}[Variational Inequality (VI) problem]
  Consider a non-empty closed convex set $\mathcal{C} \subset \mathbb{R}^n$ and a mapping $F: \mathcal{C} \rightarrow \mathbb{R}^n$. The Variational Inequality (VI) problem is defined as follows:

\[
\boxed{
\begin{aligned}
\text{VI}(\mathcal{C}, F) \hspace{20pt}
& \text{Find } \mathbf{a}^* \in \mathcal{C} \text{ such that}, \\
& \langle F(\mathbf{a}^*), \mathbf{a} - \mathbf{a}^* \rangle \leq 0, \quad \forall \mathbf{a} \in \mathcal{C}.
\end{aligned}}
\]
   
 \end{definition}
 
Here, $\langle \cdot, \cdot \rangle$ denotes the inner product in $\mathbb{R}^n$. This inequality essentially expresses a maximization condition for the functional $F(\mathbf{a})$ over the set $\mathcal{C}$, ensuring that $\mathbf{a}^*$ is a solution where the gradient $F(\mathbf{a}^*)$ is orthogonal to the tangent space of $\mathcal{C}$ at $\mathbf{a}^*$.

Now, let us introduce the concept of Variational Equilibrium (VE) in GNEP, which arises particularly in scenarios when GNEP involves coupled or shared constraints. Consider a convex compact set denoted as $\mathcal{C} \subset \mathbb{R}^{d}$, where $d=\sum_n d_n$, the sum of dimensions for each player $n$ in-game $\mathcal{G}$ and $\mathcal{C}$ set represent the shared constraints of the game $\mathcal{G}$. Now, for each player $n$, we define $\mathcal{X}_{n}(\mathbf{a}_{-n})$ as the set of feasible strategies.
\begin{equation}
    \mathcal{X}_n(\mathbf{a}_{-n})=\left \{ \mathbf{a}_n\in\mathcal{A}_n \mid (\mathbf{a}_n,\mathbf{a}_{-n})\in\mathcal{C}  \right \}
\end{equation}

\begin{definition}[Variational Equilibrium (VE)] 
A strategy profile $(\mathbf{a}^{*})$ is called the Variational Equilibrium (VE) of the game $\mathcal{G}$ with shared constraints $\mathcal{C}$ if $\mathbf{a}^{*}\in SOL(\text{VI}(\mathcal{C}, F))$, where $F(\mathbf{a}) = \left( \nabla_{\mathbf{a}_n}\phi_{n}, \dots, \nabla_{\mathbf{a}_n}\phi_{N}\right)$ represents the pseudogradient of the utility profiles of the players, and $\nabla_{\mathbf{a}_n}$ denotes the partial derivative with respect to $\mathbf{a}_n$.
\end{definition}

\begin{definition}[Monotone Game]\label{defn_monot}
A game with profiles of strategies $\mathbf{a}$ and profiles of utility functions $\phi$ is called a Monotone, strictly monotone (Diagonally strict concave (DSC)\cite{rosen1965existence}), or Strongly monotone game~ if for every distinct
$\mathbf{a}$ and $\mathbf{a}'$,
\begin{align}
\text{Monotone:}\quad&\langle F(\mathbf{a})-F(\mathbf{a}'), \mathbf{a} - \mathbf{a}' \rangle \leq 0, \label{dsc}\\
\text{Strictly Monotone:}\quad& \langle F(\mathbf{a})-F(\mathbf{a}'), \mathbf{a} - \mathbf{a}' \rangle < 0, \label{dsc}\\
\text{Strongly Monotone:}\quad& \langle F(\mathbf{a})-F(\mathbf{a}'), \mathbf{a} - \mathbf{a}' \rangle < c \left \| \mathbf{a}-\mathbf{a}' \right \| 
\end{align}  
where $c$ is a positive 
with $F$ the concatenation of the gradients of the players' utility functions
\begin{equation}
F(\mathbf{a})=\begin{bmatrix}
\displaystyle \nabla _{1} \phi_1(\mathbf{a}),
\displaystyle \nabla_{2} \phi_2(\mathbf{a}),
\dots, \displaystyle \nabla_{N} \phi_N(\mathbf{a})
\end{bmatrix},
\end{equation}
where $\nabla _{n} \phi_{n}(\mathbf{a})$ denotes the gradient of objective function of player $n$ with respect to his own strategy $\mathbf{a}_{n}$.
\end{definition}

\begin{definition}[No saturation]
   A utility function $U$ satisfies the no-saturation property if
\[
\forall \mathbf{a} \in \mathcal{A}, \, \exists \mathbf{a}' \in \mathcal{A} \text{ such that } U(\mathbf{a}') > U(\mathbf{a}).
\]
\end{definition}
\subsection{Generalized Fisher Market}\label{premelineri}
The generalized Fisher market, defined as $$\mathcal{M}:=\left \langle \mathcal{N},\left(\mathbf{x}_{n}\in\mathbb{R}^K\right)_{n\in\mathcal{N}},\left(U_{n}\right)_{n\in\mathcal{N}},\left(B_{n}\right)_{n\in\mathcal{N}},\mathbf{p}\in\mathbb{R}^K \right \rangle,$$ consists of a set of buyers denoted as $\mathcal{N} = \left \{ 1, 2, \ldots, N \right \}$ who have demands for a set of divisible goods or resources, represented as $\mathcal{K} = \left \{ 1, 2, \ldots, K \right \}$. Each buyer $n$ expresses its resource allocation preferences through a vector $\mathbf{x}_n = (x_{n1}, x_{n2}, \ldots, x_{nK})$, where $x_{nk}$ signifies the quantity of resource $k$ required by buyer $n$. The collective preferences of the resources of all buyers are encapsulated in the vector formed by all these strategies and are represented as $\mathbf{x} := \left(\mathbf{x}_n\right)_{n=1}^{N}$. Each buyer $n$ is associated with a utility function denoted as $U_{n}(\mathbf{x}_n, \mathbf{x}_{-n}): \mathbb{R}^{K\times N} \rightarrow \mathbb{R}_{+}$, indicating the quantified value or utility accrued by the buyer $n$ upon acquiring $\mathbf{x}_{n}$ resources, while its adversaries have been assigned $\mathbf{x}_{-n}:=\left( \mathbf{x}_{m} \right)_{m \neq n}^{N}$. In contrast to the traditional Fisher market, where buyers' utility depends solely on their acquired resources, we examine a broader scenario where buyers' utility depends not only on their resources but also on those of their opponents. \begin{assumption}
For each buyer $n\in\mathcal{N}$, $U_{n}$ is continuous in $\mathbf{x}$, concave in $\mathbf{x}_{n}$, and satisfies no saturation \label{assump_existence_GNE}
\end{assumption} 
Without loss of generality, we assume that each good $k$ has a unit capacity, and each buyer $n$ is endowed with a positive monetary budget $B_n$ such that $\sum_{n} B_{n} = 1$. Let $\mathbf{p}= (p_1,\dots,p_K)$ be a price vector representing the prices for all resources, where $p_k$ denotes the price per unit for the resource $k$. Given the set of prices $ \mathbf{p} \in \mathbb{R}^K_{+}$ for the resources, the feasible demand set of the buyer $n$ is defined as the set of demands that satisfies its budget.
\begin{align}
\mathcal{X}_{n}(\mathbf{p})= \left \{  \mathbf{x}_n \mid  \mathbf{x}_{n}\in \mathbb{R}^{m}, \sum_{k}{x_{nk} p_{k}}= B_{n} \right \}\label{feasibledemand}
\end{align} We assume that buyers exhibit rational, self-interested behaviour, with each buyer seeking to maximize their individual utility. In this context, they consider the decisions made by their peer buyers when determining their own actions, where the decision problem for each buyer $n$ is given as 
\begin{equation}
 \begin{aligned}
Q_{n}(\mathbf{x}_{-n};\mathbf{p})\hspace{20pt}& \underset{\mathbf{x}_{n}}{\text{maximize}}
&&U_{n}\left(\mathbf{x}_{n},\mathbf{x}_{-n}\right),\\
& \text{subject to}
&& \mathbf{x}_{n}\in \mathcal{X}_{n}(\mathbf{p}) .
\end{aligned}
\end{equation}
Given resource prices, buyers strategic interaction gives rise to a non-cooperative game, denoted as $\mathcal{G}(\mathbf{p}) \triangleq \left \langle \mathcal{N},\left(\mathbf{x}_{n}\in\mathbb{R}^K\right)_{n\in\mathcal{N}},\left(\mathcal{X}_{n}(\mathbf{p})\right)_{n\in\mathcal{N}},\left(U_{n}\right)_{n\in\mathcal{N}} \right \rangle$, comprising the set of players $\mathcal{N}$, their respective strategy spaces $\left(\mathcal{X}_{n}(\mathbf{p})\right){n\in\mathcal{N}}$, and their utility functions $\left(U_{n}\right)_{n\in\mathcal{N}}$ and Let $\mathcal{X}(\mathbf{p})=\mathcal{X}_{1}(\mathbf{p})\times\mathcal{X}_2(\mathbf{p})\times\dots \mathcal{X}_N(\mathbf{p}) $ defines the joint action space for the game. Here, we deliberately use the notation $\mathcal{G}(\mathbf{p})$ to emphasize the dependency of the game on the prices of goods, $\mathbf{p}$. The outcome of this game is determined by the Nash equilibrium.

\begin{definition} 
A strategy profile \((\mathbf{x}^{*}) \in \mathcal{X}(\mathbf{p})\) is called the Nash equilibrium (NE) of the game \(\mathcal{G}(\mathbf{p})\) if
\[ U_{n}(\mathbf{x}^{*}_{n}, \mathbf{x}^{*}_{-n}) \geq U_{n}(\mathbf{x}_{n}, \mathbf{x}^{*}_{-n}) \]
for all \(\mathbf{x}_{n} \in \mathcal{X}_{n}(\mathbf{p})\) and for all \(n \in \mathcal{N}\).

\end{definition}
In this work, we operate under the assumption that the game induced through the strategic interactions of buyers is Monotone.
\begin{assumption}
We assume that the game $\mathcal{G}(p)$ exhibits strict monotonicity.\label{assup_stricmono_utilti}
\end{assumption}
Throughout the remainder of this paper, the term Fisher market refers to the generalized Fisher market unless explicitly stated otherwise. The traditional Fisher market, devoid of social influence, can be seen as a special instance within the framework of the generalized Fisher market, where utility functions remain unaffected by opponents' allocations.

\begin{definition}[Demand Function]
    If the game $\mathcal{G}(\mathbf{p})$ has a unique NE for each $\mathbf{p} \in \mathcal{P} \subset \mathbb{R}^{K}_{+} $, this arises a function:
    \begin{align}
        \mathbf{x}^{*}_{n}(\mathbf{p}): & \mathcal{P} \rightarrow \mathbb{R}^{K}_{+} \\
        & \mathbf{p} \mapsto \mathbf{x}^{*}_{n}(\mathbf{p}),
    \end{align}
    which is called the demand function.
\end{definition}

\begin{definition}[Aggregate Excess Demand Function]
    We now define a particular aggregate excess demand function:
    \begin{align}
        z_{k}(\mathbf{p}): & \mathcal{P} \rightarrow \mathbb{R} \quad \forall k \in \mathcal{K} \\
        & \mathbf{p} \mapsto z_{k}(\mathbf{p}) = \left(\sum_{n} x^{*}_{nk}(\mathbf{p}) - 1\right).
    \end{align}
    By grouping these components into a vector, we introduce:
    \begin{equation}
        \mathbf{z}(\mathbf{p}) := \left(z_{1}(\mathbf{p}), \dots, z_{K}(\mathbf{p})\right) \in \mathbb{R}^{K}.
    \end{equation}
\end{definition}

\begin{definition}
A competitive equilibrium for the market $\mathcal{M}$ is defined as a pair of prices and allocation $(\mathbf{p}^*,\mathbf{x}^{*})$, where the market clears its resources and buyers get their favorite resource bundle. Mathematically $(\mathbf{p}^*,\mathbf{x}^{*})$ is CE if the following two conditions are satisfied.

\begin{description}
    \item[C1] Given the resource price vector, every buyer $n$ spend its budget such that it receives resource bundle $\mathbf{x}^*_{n}$ that maximizes its utility. 
\begin{align}
\mathbf{x}_{n}^{*}\in \argmaxA \left \{  U_{n}(\mathbf{x}_n,\mathbf{x}_{-n}^{*}) \mid  \mathbf{x}_{n} \in \mathcal{X}_n(\mathbf{p}) \right \}\;\forall n \in \mathcal{N}\label{bestdemand}
\end{align}\label{C1}
\item[C2] Either the total demand of each resource meets the capacity and will be positively priced; otherwise, that resource  has zero price, i.e., we have:
\begin{align}
p^{*}_k\left ( \sum_{n} x^{*}_{nk}-1 \right )=0,\forall k\in\mathcal{K}
\end{align}\label{c2}
\end{description}
\end{definition}

\section{Competitive equilibrium  Problem}
In this study, our objective is to develop a method to determine competitive equilibrium in a generalized Fisher market. Initially, we illustrate that this equilibrium can be attained through a pseudo-game or GNEP involving both buyers and auctioneer.
\begin{definition}[Auctioneer Buyer Pseudo Game]
We consider the game $\widehat{\mathcal{G}}$ consisting of \textit{ players:} $N$ buyers and one auctioneer.
\textit{Actions:}
Each buyer chooses an allocation $\mathbf{x}_{n} \in \mathcal{A}_n = \mathbb{R}^{K}_{+}$ and the auctioneer chooses prices $\mathbf{p} \in \mathcal{A}_{N+1}=\mathbb{R}^{K}_{+}$.
\textit{Action space:}
The feasible action space for each buyer $n$ is $\mathcal{X}_{n}(\mathbf{p})$ as defined in \eqref{feasibledemand} while for the auctioneer it is $\mathcal{P} = \left\{ \mathbf{p} \mid \mathbf{p} \in \mathbb{R}^{K}_{+}, \sum_{k} p_{k} = 1 \right\}$.
\textit{Utility:}
For buyers, $\phi_{n} = U_n$.
         For the auctioneer, $\phi_{N+1} = \sum_{k \in \mathcal{K}} p_{k} \left( \sum_{n \in \mathcal{N}} x_{nk} - 1 \right)$.
\end{definition}
\begin{theorem}[\cite{zhao2023fisher}]
    For any Fisher market $\mathcal{M}$, where buyers utilities satisfy the Assumption \ref{assump_existence_GNE},  the set of competitive equilibria corresponds precisely to the set of generalized Nash equilibria (GNE) of the associated auctioneer-buyer pseudo-game $\widehat{\mathcal{G}}$.
\end{theorem}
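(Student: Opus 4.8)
The plan is to prove the asserted set equality by establishing the two inclusions separately, with the engine of both being a budget-balance identity (Walras' law) for the pseudo-game $\widehat{\mathcal{G}}$. The buyers' half is essentially definitional: in any GNE of $\widehat{\mathcal{G}}$, buyer $n$'s component solves its maximization over $\mathcal{X}_n(\mathbf{p}^*)$, which is verbatim condition C1, and conversely C1 is precisely the buyers' best-response requirement. Hence all the content sits in translating the auctioneer's best-response condition into the complementary-slackness condition C2 together with market feasibility, and back.

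For the direction GNE $\Rightarrow$ CE, I would start from the observation that, since the budget constraint in $\mathcal{X}_n(\mathbf{p}^*)$ is an equality, every buyer exhausts its budget, $\sum_k x^*_{nk} p^*_k = B_n$. Summing over $n$ and using $\sum_n B_n = 1$ and $\sum_k p^*_k = 1$ yields $\sum_k p^*_k z_k = 0$, where $z_k = \sum_n x^*_{nk} - 1$; that is, the auctioneer's equilibrium payoff $\phi_{N+1}$ is exactly zero. The auctioneer's unilateral deviation problem, with $\mathbf{x}^*$ held fixed, is the linear program $\max_{\mathbf{p}\in\mathcal{P}}\sum_k p_k z_k$ over the simplex, whose optimum is attained at a vertex and equals $\max_k z_k$. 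Optimality of $\mathbf{p}^*$ therefore forces $\max_k z_k = 0$, so each $z_k \le 0$ (feasibility); combining $z_k \le 0$, $p^*_k \ge 0$, and $\sum_k p^*_k z_k = 0$ makes every summand vanish, which is exactly C2.

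For the converse CE $\Rightarrow$ GNE, C1 again delivers the buyers' best responses, so only the auctioneer's optimality remains. Invoking the market-clearing feasibility $z_k \le 0$ that is part of the competitive-equilibrium definition, together with $\mathbf{p}\in\mathcal{P}$, every deviation satisfies $\sum_k p_k z_k \le 0$, while C2 gives the incumbent value $\sum_k p^*_k z_k = 0$; thus $\mathbf{p}^*$ maximizes the auctioneer's payoff and $(\mathbf{x}^*,\mathbf{p}^*)$ is a GNE.

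The step I expect to be the main obstacle is the auctioneer's half, specifically making the ``linear program over the simplex'' reasoning airtight and reconciling it with the exact wording of the CE definition. Two points require care. First, the deviation must be evaluated with the buyers' allocations frozen at $\mathbf{x}^*$ --- the generalized-game semantics --- which is what renders the auctioneer's payoff linear in $\mathbf{p}$ and hence solvable by inspection. Second, one must ensure that the feasibility condition $z_k \le 0$ is genuinely encoded in ``the market clears its resources,'' since, as the argument shows, C1 and C2 alone do not pin it down (a good with $p^*_k = 0$ could carry positive excess demand). A secondary technical wrinkle concerns goods priced at zero: such prices make $\mathcal{X}_n(\mathbf{p})$ unbounded, so under the no-saturation Assumption~\ref{assump_existence_GNE} one should check that the buyers' problems in $Q_n$ still admit solutions on the price support, keeping the GNE well defined; this is consistent with the Arrow--Debreu-type existence guarantee already cited for concave utilities over convex, compact, continuously varying feasible sets.
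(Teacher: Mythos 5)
The paper does not prove this theorem itself---it is imported from the cited reference \cite{zhao2023fisher}---so there is no in-paper argument to compare against; your proposal is, however, the standard proof of this correspondence and is correct. The decomposition you use (buyers' best responses $\leftrightarrow$ C1 by definition; Walras' law from the equality budget constraints giving $\sum_k p_k^* z_k = 0$; the auctioneer's payoff being linear on the simplex so that optimality forces $\max_k z_k = 0$ and hence C2 plus feasibility) is exactly the argument in the cited source, and the two caveats you flag---that $z_k \le 0$ must be read into ``the market clears its resources'' rather than deduced from C1--C2 alone, and that zero prices make $\mathcal{X}_n(\mathbf{p})$ unbounded so the buyers' subproblems need the equilibrium point itself to certify solvability---are genuine and correctly identified as not affecting the set equality.
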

Thus far, we have studied a connection between the CE in a Fisher market and the GNE in an Auctioneer-Buyer pseudo-game, with
its existence being ensured under mild conditions. However, a fundamental challenge remains: How can we effectively compute the GNE in this game? The task of determining a NE in a GNEP is generally recognized as challenging. However,
within a subset of GNEPs where players share common coupled constraints, the problem can be
reformulated as a variational inequality (VI) problem \cite{facchinei2007generalized}, with its solution referred to as a variational equilibrium (VE). This type of equilibrium is well-studied and has efficient
polynomial-time algorithms for solution.  Unfortunately, in our specific scenario, the shared coupled constraint between players,
expressed as $\sum_{k\in\mathcal{K}}\sum_{n\in\mathcal{N}} p_k x_{nk}=1$, does not exhibit joint convexity. Consequently, the game cannot be
directly resolved as a VI problem. However, we introduce a novel formulation of the VI problem, different
from the standard framework, and demonstrate that its solution coincides precisely with 
the game $\widehat{\mathcal{G}}$, \emph{i.e,} CE of the market $\mathcal{M}$. Before exploring the VI formulation, we first examine the KKT conditions at the GNE of the game $\widehat{\mathcal{G}}$, which play a pivotal role in obtaining our main result.

\begin{proposition}
$(\mathbf{p}^*,\mathbf{x}^*)$ is CE to market $\mathcal{M}$ (GNE of game $\widehat{\mathcal{G}}$)  if it satisfies the system of KKT conditions \eqref{KKT_comb} \label{proposition_KKT}  

\begin{subequations}

\begin{align}
\;&\text{For each buyer $n$ in }\mathcal{N}\nonumber\\
\;& \left[ \frac{B_n\displaystyle \nabla_{nk} U_{n}(\mathbf{x}_n,\mathbf{x}_{-n}^{*})}{\sum_{k'\in\mathcal{K}}\displaystyle \nabla_{nk'} U_{n}(\mathbf{x}_n,\mathbf{x}_{-n}^{*})x_{nk'}}\right]_{\mathbf{x}_{n}=\mathbf{x}_{n}^{*}}-p_{k}^{*}+\gamma^{*}_{nk}=0,\;&&\forall k \in \mathcal{K}\label{kkt_comb:1}\\
\;&p_{k}^{*}\left(\sum_{n\in\mathcal{N}} x_{nk}^{*}-1\right)=0, \gamma_{nk}^{*}x^{*}_{nk}=0,\; &&\forall k\in\mathcal{K}\label{kkt_comb:2}\\
\;& p_{k}^{*}\geq 0,\gamma^{*}_{nk}\geq 0,\; &&\forall k\in\mathcal{K}\label{kkt_comb:4}
 \end{align}\label{KKT_comb}
\end{subequations}
\end{proposition}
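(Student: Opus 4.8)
The plan is to lean on the cited equivalence (the Theorem attributed to \cite{zhao2023fisher}), which already identifies the competitive equilibria of $\mathcal{M}$ with the generalized Nash equilibria of the pseudo-game $\widehat{\mathcal{G}}$, so that it suffices to show that a point is a GNE of $\widehat{\mathcal{G}}$ exactly when it satisfies \eqref{KKT_comb}. Because a GNE requires every player to be playing a best response to the others' equilibrium actions, I would treat the $N$ buyer subproblems and the single auctioneer subproblem separately, each with the opponents' variables frozen at their equilibrium values. The key point that makes this legitimate despite the coupled constraint $\sum_{n,k}p_k x_{nk}=1$ not being jointly convex is that once the opponents' actions are fixed, each individual subproblem is convex: a buyer maximizes a concave $U_n(\cdot,\mathbf{x}_{-n}^*)$ over an affine budget equality plus nonnegativity, and the auctioneer maximizes a linear functional over the simplex. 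Hence for each subproblem the KKT conditions are both necessary and sufficient for optimality, and joint convexity is never invoked.

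First I would handle a buyer $n$. With $\mathbf{p}^*$ and $\mathbf{x}_{-n}^*$ fixed, form the Lagrangian with a multiplier $\lambda_n$ for the budget equality and multipliers $\gamma_{nk}\ge 0$ for $x_{nk}\ge 0$; stationarity reads $\nabla_{nk}U_n+\gamma_{nk}=\lambda_n p_k^*$ together with complementary slackness $\gamma_{nk}x_{nk}^*=0$. The central algebraic step is eliminating $\lambda_n$: multiplying stationarity by $x_{nk}^*$, summing over $k$, and using the tight budget constraint $\sum_k p_k^* x_{nk}^*=B_n$ together with $\sum_k \gamma_{nk}x_{nk}^*=0$ gives the closed form $\lambda_n=\big(\sum_{k'}\nabla_{nk'}U_n\, x_{nk'}^*\big)/B_n$. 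By the no-saturation assumption the numerator is strictly positive, so $\lambda_n>0$; dividing stationarity by $\lambda_n$ and relabelling $\gamma_{nk}^*:=\gamma_{nk}/\lambda_n\ge 0$ yields exactly \eqref{kkt_comb:1}, while the rescaled complementary slackness and dual feasibility give the buyer halves of \eqref{kkt_comb:2} and \eqref{kkt_comb:4}.

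Next I would treat the auctioneer, whose best response to $\mathbf{x}^*$ maximizes $\sum_k p_k\big(\sum_n x_{nk}^*-1\big)=\sum_k p_k z_k$ over $\mathcal{P}$. The crucial observation is Walras' law: summing the tight budget constraints over all buyers and using $\sum_k p_k^*=1$ gives $\sum_k p_k^* z_k=\sum_n B_n-1=0$, so the auctioneer's equilibrium payoff is $0$. Since the maximum of a linear functional over the simplex equals $\max_k z_k$ and is attained only at coordinates realizing that maximum, optimality with value $0$ forces $z_k\le 0$ for all $k$ and $z_k=0$ wherever $p_k^*>0$, i.e. $p_k^*\big(\sum_n x_{nk}^*-1\big)=0$, the market-clearing half of \eqref{kkt_comb:2} (recovering condition C2). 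The converse reverses these steps: a point satisfying \eqref{kkt_comb:1}–\eqref{kkt_comb:4} (together with primal feasibility) meets the sufficient KKT systems of every buyer and of the auctioneer, so it is a GNE of $\widehat{\mathcal{G}}$ and hence a CE of $\mathcal{M}$.

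I expect the main obstacle to be the multiplier elimination together with the argument that $\lambda_n>0$: the normalized form in \eqref{kkt_comb:1} only appears after dividing by $\lambda_n$, so I must rule out $\lambda_n=0$ (otherwise the rescaling of $\gamma$ and the entire normalization collapse), and this is precisely where the no-saturation assumption, rather than mere concavity, is indispensable. A secondary delicate point is the auctioneer side, where I must show that the complementary slackness $p_k^* z_k=0$ listed in \eqref{kkt_comb:2} is genuinely equivalent to $\mathbf{p}^*$ being a best response; this needs the accompanying sign condition $z_k\le 0$ extracted from optimality over the simplex, so I would take care to verify that the listed KKT system (read with primal feasibility) actually encodes this and therefore suffices for the converse direction.
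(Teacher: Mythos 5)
Your proposal follows essentially the same route as the paper's proof: write the KKT system of each buyer's subproblem at the GNE of $\widehat{\mathcal{G}}$, eliminate the budget multiplier by multiplying the stationarity condition by $x_{nk}^{*}$, summing over $k$, and using the tight budget constraint together with complementary slackness, then divide through by that multiplier and rescale the nonnegativity multipliers to obtain \eqref{kkt_comb:1}. You are in fact slightly more careful than the paper, which divides by $\mu_n^{*}$ without justifying $\mu_n^{*}>0$ and only gestures at the auctioneer's subproblem, both of which you handle explicitly via no-saturation and the linear-maximization-over-the-simplex (Walras' law) argument.
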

\begin{proof}
Appendix~\ref{append1}
\end{proof}
In the KKT conditions  \eqref{KKT_comb}, the complementary slackness and dual feasibility conditions related to the budget constraint \eqref{feasibledemand} for each buyer do not appear explicitly. However, they are implicitly captured by the stationarity condition \eqref{kkt_comb:1}, which allows us to reformulate these conditions as a variational inequality problem. Interestingly, multiplying both sides of \eqref{kkt_comb:1} by $x_{nk}$ yields:
\begin{equation}
    \left[ \frac{B_n \displaystyle \nabla_{nk} U_{n}(\mathbf{x}_n,\mathbf{x}_{-n}^{*})x_{nk}}{\sum_{k'\in\mathcal{K}}\displaystyle \nabla_{nk'} U_{n}(\mathbf{x}_n,\mathbf{x}_{-n}^{*})x_{nk'}}\right]_{\mathbf{x}_{n}=\mathbf{x}_{n}^{*}} - b_{nk}^{*} = 0,\;\forall k \in \mathcal{K}.
\end{equation} where $b_{nk}^{*}=p_k^{*}\cdot x_{nk}^{*}$. This equation resembles a family of dynamics such as Multiplicative Weights Update \cite{bailey2018multiplicativ} and replicator dynamics\cite{replicator_SorinSylvain}, illustrating how the budget is distributed proportionally to the weighted marginal utilities between resources. We will see later in section \ref{sec_learning} how this structure can be used to design an equilibrium learning algorithm.

\subsection{Variational Inequality (VI) problem formulation of game $\widehat{\mathcal{G}}$}
\begin{definition}
    Consider the Variational Inequality  problem  $\text{VI}(\mathcal{C}, F)$, with $F$ and  the (coupled) constraint $\mathcal{C}$ are defined as \eqref{VI_F} and \eqref{VI_C} respectively   

\begin{equation}
F_{nk}(\mathbf{x}):\mathbb{R}^{NK}\rightarrow \mathbb{R}\;\text{as}\;
   F_{nk}(\mathbf{x})=\frac{B_{n}\displaystyle \nabla_{nk}U_n(\mathbf{x})}{\sum_{k'\in\mathcal{K}}\displaystyle \nabla_{nk'}U_n(\mathbf{x})x_{nk'}}  \;\label{VI_F}
\end{equation}
\begin{equation}
    \text{and}\;F_{n}(\mathbf{x})=(F_{nk}(\mathbf{x}))_{k=1}^{K}\text{and}\;
   F(\mathbf{x})=(F_{n}(\mathbf{x}))_{n=1}^{N}\nonumber
\end{equation}
 \begin{equation}
       \mathcal{C}= \left \{  \mathbf{x} \mid  \mathbf{x}\in \mathbb{R}^{NK},\sum_{n\in\mathcal{N}}x_{nk} \leq 1, \forall k \in \mathcal{K} \right \} \label{VI_C}
   \end{equation}

\end{definition}
\begin{assumption}
$F(\mathbf{x})$  is continuous  and strictly monotone in $\mathbf{x}$\label{assum_varri}
\end{assumption}

\begin{theorem}   
If the utilities of the buyers in market $\mathcal{M}$ satisfy Assumption~\ref{assump_existence_GNE},\ref{assup_stricmono_utilti} and \ref{assum_varri} then 
\begin{itemize}
\item variational equilibrium  $\mathbf{x}_{VE}^*$ to a problem $\text{VI}(\mathcal{C}, F)$ is unique 
    \item  $\mathbf{x}^*$ is CE allocation if and only if it is a variational equilibrium to $\text{VI}(\mathcal{C}, F)$ 
    \item If  $\mathbf{x}_{VE}^*$ is the solution to $\text{VI}(\mathcal{C}, F)$ then corresponding KKT conditions are satisfied with optimal Lagrange multiplier vector $\lambda$ which correspond the CE prices $\mathbf{p}^{*}$.  
\end{itemize}

\end{theorem}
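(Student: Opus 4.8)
The plan is to reduce all three claims to the KKT system of $\text{VI}(\mathcal{C}, F)$ and then match that system, term by term, against the competitive-equilibrium KKT system of Proposition~\ref{proposition_KKT}. I would treat the three bullets in the order: uniqueness (from strict monotonicity), the CE~$\Leftrightarrow$~VE equivalence (from KKT matching), and finally the price interpretation of the multiplier, which is a corollary of the matching.

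\emph{Uniqueness.} I would run the classical two-solution argument. If $\mathbf{x}^*$ and $\mathbf{y}^*$ both solve the VI, instantiating the defining inequality at $\mathbf{x}^*$ with test point $\mathbf{y}^*$ and at $\mathbf{y}^*$ with test point $\mathbf{x}^*$, then adding the two, yields $\langle F(\mathbf{x}^*)-F(\mathbf{y}^*),\,\mathbf{x}^*-\mathbf{y}^*\rangle \ge 0$. In the paper's sign convention strict monotonicity (Assumption~\ref{assum_varri}) reads $\langle F(\mathbf{a})-F(\mathbf{a}'),\,\mathbf{a}-\mathbf{a}'\rangle<0$ for $\mathbf{a}\neq\mathbf{a}'$, so $\mathbf{x}^*\neq\mathbf{y}^*$ is impossible. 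Existence follows because $\mathcal{C}$ (taken inside the non-negative orthant, so that it is nonempty, convex and compact) together with the continuity of $F$ places us in the scope of the standard Brouwer-based VI existence theorem; combined with the above this gives the unique $\mathbf{x}^*_{VE}$.

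\emph{Equivalence.} The crux is to write the KKT conditions of $\text{VI}(\mathcal{C}, F)$. Since $\mathcal{C}$ is polyhedral --- cut out by the linear inequalities $\sum_n x_{nk}\le 1$ and $x_{nk}\ge 0$ --- a linear constraint qualification holds automatically, so $\mathbf{x}^*$ solves the VI if and only if there exist multipliers $\lambda_k\ge 0$ (capacity) and $\gamma_{nk}\ge 0$ (non-negativity) satisfying $F_{nk}(\mathbf{x}^*)-\lambda_k+\gamma_{nk}=0$, $\lambda_k(\sum_n x^*_{nk}-1)=0$ and $\gamma_{nk}x^*_{nk}=0$. The easy direction (KKT~$\Rightarrow$~VE) is a direct computation: substituting $F_{nk}(\mathbf{x}^*)=\lambda_k-\gamma_{nk}$ into $\langle F(\mathbf{x}^*),\mathbf{x}-\mathbf{x}^*\rangle$ and using the two slackness relations makes the inner product nonpositive for every $\mathbf{x}\in\mathcal{C}$; the converse uses polyhedral CQ. Now this system is \emph{identical} to \eqref{KKT_comb} under the identification $\lambda_k=p^*_k$, so chaining with Proposition~\ref{proposition_KKT} gives $\mathbf{x}^*$ is a VE $\iff$ it satisfies the VI-KKT system $\iff$ it satisfies \eqref{KKT_comb} $\iff$ $(\mathbf{p}^*,\mathbf{x}^*)$ is a CE.

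\emph{Price interpretation and the main obstacle.} The identification $\lambda_k=p^*_k$ already delivers the third bullet: the multiplier on the coupled capacity constraint is the equilibrium price. The delicate point --- and what I expect to be the real work --- is justifying that the VI, which \emph{discards} the original non-convex budget coupling $\sum_{n,k}p_k x_{nk}=1$ in favor of the convex set $\mathcal{C}$, still enforces each budget. I would establish this from the structure of $F$: multiplying the stationarity equation by $x^*_{nk}$, summing over $k$, and using the algebraic identity $\sum_k F_{nk}(\mathbf{x}^*)x^*_{nk}=B_n$ (immediate from \eqref{VI_F}, where the weighted-utility denominator cancels against the numerator sum) together with $\gamma_{nk}x^*_{nk}=0$, gives $\sum_k \lambda_k x^*_{nk}=B_n$, i.e.\ each buyer exactly exhausts budget $B_n$ at prices $\lambda=\mathbf{p}^*$. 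This recovers budget feasibility in \textbf{C1}, while complementary slackness $\lambda_k(\sum_n x^*_{nk}-1)=0$ recovers market clearing \textbf{C2}. The main obstacle is thus entirely contained in showing that this non-standard VI --- with a rescaled operator and a substituted convex constraint --- faithfully reproduces the CE; the cancellation identity $\sum_k F_{nk}(\mathbf{x}^*)x^*_{nk}=B_n$ is the mechanism that makes the reformulation sound.
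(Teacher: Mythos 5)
Your proposal is correct and follows essentially the same route as the paper's proof: uniqueness from strict monotonicity (the paper cites Nagurney's Theorem 1.6, which is your two-solution argument), then writing the KKT system of $\text{VI}(\mathcal{C},F)$ and matching it to \eqref{KKT_comb} under the identification $\lambda_k = p_k^*$, $\nu_{nk}=\gamma_{nk}$, which simultaneously yields the equivalence and the price interpretation. Your explicit cancellation identity $\sum_k F_{nk}(\mathbf{x}^*)x^*_{nk}=B_n$, recovering budget exhaustion from the VI multipliers, is a detail the paper leaves implicit (it is the remark after Proposition~\ref{proposition_KKT} about the budget constraint being absorbed into the stationarity condition), and including it makes the reverse direction cleaner.
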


\begin{proof}
Appendix~\ref{append2}
\end{proof}
In the above theorem, the uniqueness of the VE refers only to the allocation $\mathbf{x}_{VE}^*$. Although the allocation is unique, there can be multiple price vectors $\mathbf{p}^*$ that lead to the same allocation. So far, we have demonstrated how the variational inequality problem formulation of the game 
$\widehat{\mathcal{G}}$ can be derived from the KKT conditions obtained at the GNE. In subsequent sections, we establish several existing results from the literature as corollaries to our findings. Specifically, we investigate scenarios where the buyers' utilities exhibit homogeneity of function degree one, illustrating how these cases are special instances within our broader framework.

\section{Homogeneous utiltiy}
Let's delve into a particular market scenario which is extensively explored in the literature, where the utilities of buyers are modelled as CCH. We aim to establish the connection between existing findings in the literature and our results.

\begin{corollary}
If each buyer's utility function within market $\mathcal{M}$ is continuous, concave and homogeneous of degree one (CCH) with respect to their own decisions, then the following results hold:
\begin{itemize}
    \item{Theorem 7. \cite{zhao2023fisher} } The competitive equilibrium pair $(\mathbf{p}^*,\mathbf{x}^*)$ can be determined by solving a variational equilibrium within a coupled constraint game. In this game, each buyer $n \in \mathcal{N}$ seeks to maximize their utility $\phi_n(\mathbf{x}_n,\mathbf{x}_{-n})= B_{n}\log{(U_n(\mathbf{x}))}$ subject to coupled constraints. $\sum_{ n \in \mathcal{N}} x_{nk}    \leq 1,\; \forall k\in {\mathcal{K}}$
    \item  Furthermore, in the non-social influence scenario where each buyer's utility depends only on their own decision, the competitive equilibrium can be computed by solving an (Eisenberg-Gale) optimization program \eqref{EG_prog}. 
\end{itemize}
 \begin{subequations}\label{galeconvex}
\begin{align}
&\hspace{5cm} \underset{\mathbf{x}}{\text{Maximize :}} 
& & \sum_{n\in \mathcal {N}}B_n \log(U_n(\mathbf{x}_{n})) \label{primal}\\
& \hspace{5cm} \text{subject to :} 
& & \sum_{ n \in \mathcal{N}} x_{nk}    \leq 1, \quad \forall k\in {\mathcal{K}} \label{const3_m}.
 \end{align}\label{EG_prog}
 \end{subequations}
\label{extension_corollary}
where $\mathbf{x}^*$ represents the equilibrium allocation, while $\mathbf{p}^*$ denotes the Lagrange multipliers vector associated with the constraints \eqref{const3_m}. 
\end{corollary}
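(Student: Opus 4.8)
The plan is to reduce both bullets to the main Theorem by showing that, under homogeneity of degree one, the map $F$ in \eqref{VI_F} coincides with the pseudogradient of the game whose payoffs are $\phi_n(\mathbf{x}) = B_n \log\bigl(U_n(\mathbf{x})\bigr)$. The single lever is Euler's theorem for homogeneous functions: since $U_n$ is homogeneous of degree one in $\mathbf{x}_n$, we have $\sum_{k'\in\mathcal{K}} \nabla_{nk'} U_n(\mathbf{x})\, x_{nk'} = U_n(\mathbf{x})$. Substituting this identity into the denominator of $F_{nk}$ yields
$$
F_{nk}(\mathbf{x}) = \frac{B_n\, \nabla_{nk} U_n(\mathbf{x})}{U_n(\mathbf{x})} = \nabla_{nk}\bigl(B_n \log U_n(\mathbf{x})\bigr) = \nabla_{nk}\phi_n(\mathbf{x}).
$$
Hence $F$ is exactly the pseudogradient of the profile $(\phi_n)_{n\in\mathcal{N}}$, and the constraint set $\mathcal{C}$ in \eqref{VI_C} is precisely the coupled constraint $\sum_{n} x_{nk} \le 1$. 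By the Definition of Variational Equilibrium, $\text{VI}(\mathcal{C}, F)$ is then the VI whose solution is the VE of the coupled-constraint game with payoffs $\phi_n$; combined with the main Theorem, which identifies the CE allocation with the unique solution of $\text{VI}(\mathcal{C},F)$, this establishes the first bullet.

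For the second bullet I would specialize to the non-social-influence case, where $U_n$ depends only on $\mathbf{x}_n$, so $\phi_n(\mathbf{x}) = B_n \log U_n(\mathbf{x}_n)$ depends only on player $n$'s own action. In this separable regime the game admits the exact potential $\Phi(\mathbf{x}) = \sum_{n\in\mathcal{N}} B_n \log U_n(\mathbf{x}_n)$, because $\nabla_{nk}\Phi = \nabla_{nk}\phi_n = F_{nk}$. The variational condition $\langle F(\mathbf{x}^*), \mathbf{x}-\mathbf{x}^*\rangle \le 0$ for all $\mathbf{x}\in\mathcal{C}$ is then exactly the first-order stationarity condition for maximizing $\Phi$ over the polytope $\mathcal{C}$, i.e. the Eisenberg--Gale program \eqref{EG_prog}. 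To promote this from stationarity to global optimality I would invoke concavity of $\Phi$, after which the VI solution is the maximizer of \eqref{EG_prog}. Finally, to match the CE prices with the Lagrange multipliers I would compare the KKT system of \eqref{EG_prog}, with multipliers attached to the capacity constraints \eqref{const3_m}, against the KKT conditions \eqref{KKT_comb} of Proposition~\ref{proposition_KKT}, reading off $\mathbf{p}^*$ as the multiplier vector.

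The main obstacle I expect is the concavity step in the second bullet: establishing that $\log U_n$ is concave for a \emph{general} continuous, concave, degree-one-homogeneous $U_n$, rather than a special parametric family. This is the classical fact underpinning the convexity of the EG program, and without it the VI only certifies a stationary point instead of a global optimum; I would either cite \cite{eisenberg1961aggregation,jain2007eisenberg} or give the short argument that a positive concave function that is homogeneous of degree one has a concave logarithm. By contrast, the Euler-theorem reduction is routine, and once it is in place the first bullet follows immediately from the definition of VE and the main Theorem, with the only point of care being that the homogeneity degree is exactly one so that the denominator collapses cleanly to $U_n(\mathbf{x})$.
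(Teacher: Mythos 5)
Your proof is correct and follows essentially the same route as the paper: Euler's theorem for degree-one homogeneity collapses the denominator of $F_{nk}$ so that $F$ becomes the pseudogradient of the payoffs $B_n\log U_n(\mathbf{x})$, after which the first bullet is just the definition of a variational equilibrium of the coupled-constraint game over $\mathcal{C}$, and the second bullet follows because in the separable case this pseudogradient is the gradient of the Eisenberg--Gale objective. The only differences are cosmetic --- the paper performs the Euler substitution inside the KKT system of Proposition~\ref{proposition_KKT} and matches KKT systems term by term, whereas you work directly on $F$ and invoke the main VI theorem --- and your explicit flagging of the log-concavity of a positive concave $U_n$ (which holds by weighted AM--GM, so the obstacle you anticipate is harmless) makes the convexity of \eqref{EG_prog} more careful than the paper's one-line treatment of the second bullet.
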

\begin{proof}
Appendix~\ref{append3}
\end{proof}
In the next section, we examine several properties of the variational inequality problem which serve as the basis for developing the subsequent decentralized learning algorithm. 
\section{ 
Monotonicity, Stability, and Uniqueness in $\text{VI}(\mathcal{C}, F)$}
\begin{definition}[Variational stability~\cite{Mertikopoulos2019}]
    We say solution (equilibrium) to $VI(F,\mathcal{C})$ is stable if there 
exists a neighborhood $N_{bd}(\mathbf{x}^*)$ of $\mathbf{x}^*$
such that \begin{equation}
 \boxed{  \langle F(\mathbf{x}), \mathbf{x}-\mathbf{x}^*  \rangle \leq 0 \quad \forall \mathbf{x} \in  N_{bd}(\mathbf{x}^*)}
\end{equation}
with equality if and only if $\mathbf{x}= \mathbf{x}^*$. In particular, if $N_{bd}$ can be taken to be all of $\mathcal{C}$,
we say that $\mathbf{x}^*$ is globally variationally stable (or globally stable for short).
\end{definition}
 
\begin{proposition}
If the function \( F \) in the  problem \(\text{VI}(\mathcal{C}, F)\) satisfies Assumption~\ref{assum_varri}, then the VE for the problem \(\text{VI}(\mathcal{C}, F)\) is unique and globally stable. 
\end{proposition}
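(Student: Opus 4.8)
The plan is to derive both conclusions from the strict monotonicity of $F$ (Assumption~\ref{assum_varri}) together with the defining inequality of $\text{VI}(\mathcal{C},F)$, taking care that the paper adopts the maximization sign convention: a solution $\mathbf{x}^*$ satisfies $\langle F(\mathbf{x}^*),\mathbf{x}-\mathbf{x}^*\rangle\leq 0$ for all $\mathbf{x}\in\mathcal{C}$, and correspondingly strict monotonicity reads $\langle F(\mathbf{a})-F(\mathbf{a}'),\mathbf{a}-\mathbf{a}'\rangle<0$ for $\mathbf{a}\neq\mathbf{a}'$. First I would dispose of existence, so that speaking of \emph{the} VE $\mathbf{x}^*_{VE}$ is legitimate: since $F$ is continuous on the convex set $\mathcal{C}$ and the relevant allocations are confined to a bounded (hence compact) region by the budget constraints, a solution exists by the Hartman--Stampacchia argument; alternatively, existence is already inherited from the preceding theorem identifying the VE with the competitive equilibrium of $\mathcal{M}$.

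For global stability, I would fix the solution $\mathbf{x}^*=\mathbf{x}^*_{VE}$ and an arbitrary $\mathbf{x}\in\mathcal{C}$ with $\mathbf{x}\neq\mathbf{x}^*$, then split
\[
\langle F(\mathbf{x}),\mathbf{x}-\mathbf{x}^*\rangle=\langle F(\mathbf{x})-F(\mathbf{x}^*),\mathbf{x}-\mathbf{x}^*\rangle+\langle F(\mathbf{x}^*),\mathbf{x}-\mathbf{x}^*\rangle .
\]
The first term is strictly negative by strict monotonicity (with $\mathbf{a}=\mathbf{x}$, $\mathbf{a}'=\mathbf{x}^*$), and the second is nonpositive because $\mathbf{x}^*$ solves the VI; hence the sum is strictly negative. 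Since the left-hand side vanishes at $\mathbf{x}=\mathbf{x}^*$, this gives $\langle F(\mathbf{x}),\mathbf{x}-\mathbf{x}^*\rangle\leq 0$ for all $\mathbf{x}\in\mathcal{C}$ with equality if and only if $\mathbf{x}=\mathbf{x}^*$, i.e.\ $\mathbf{x}^*$ is globally variationally stable with $N_{bd}=\mathcal{C}$.

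Uniqueness then follows immediately. Given a second solution $\mathbf{y}^*$, adding the two VI inequalities $\langle F(\mathbf{x}^*),\mathbf{y}^*-\mathbf{x}^*\rangle\leq 0$ and $\langle F(\mathbf{y}^*),\mathbf{x}^*-\mathbf{y}^*\rangle\leq 0$ yields $\langle F(\mathbf{x}^*)-F(\mathbf{y}^*),\mathbf{x}^*-\mathbf{y}^*\rangle\geq 0$, which contradicts strict monotonicity unless $\mathbf{x}^*=\mathbf{y}^*$. (Equivalently, the global stability inequality applied at $\mathbf{x}=\mathbf{y}^*$ together with the VI inequality for $\mathbf{y}^*$ pins $\langle F(\mathbf{y}^*),\mathbf{y}^*-\mathbf{x}^*\rangle$ to zero, forcing $\mathbf{y}^*=\mathbf{x}^*$.)

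I expect no serious obstacle in the monotonicity algebra itself. The only points needing care are bookkeeping of the nonstandard ($\leq 0$) sign convention, so that the strictly negative and nonpositive terms combine with the correct orientation, and justifying existence and compactness so that ``the'' VE is well defined --- the latter supplied by continuity of $F$ and the boundedness enforced by the budget constraints, or simply cited from the preceding theorem.
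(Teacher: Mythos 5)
Your proof is correct and follows essentially the same route as the paper: the global stability argument is the identical decomposition of $\langle F(\mathbf{x}),\mathbf{x}-\mathbf{x}^*\rangle$ into the strict-monotonicity term plus the VI-optimality term, and your two-inequality uniqueness argument is just the standard proof that the paper outsources to Theorem 1.6 of Nagurney. Your additional remarks on existence and on the sign convention are sound but not needed for what the proposition claims.
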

\begin{proof}If Assumption  \ref{assum_varri} holds, then the uniqueness follows from Theorem 1.6 \cite{nagurney2009network}. Let $\mathbf{x}^*$ denote the VE   of the problem $\text{VI}(\mathcal{C}, F)$. Given that $F$ is strictly monotone by Assumption~\ref{assum_varri}, we have:
\[
\langle F(\mathbf{x}) - F(\mathbf{x}^*), \mathbf{x} - \mathbf{x}^* \rangle \leq 0
\]

Since $\mathbf{x}^*$ is the VE solution of $\text{VI}(\mathcal{C}, F)$, by definition it satisfies:
\[
\langle F(\mathbf{x}^*), \mathbf{x} - \mathbf{x}^* \rangle \leq 0
\]

Subtracting these two inequalities gives us:
\[
\langle F(\mathbf{x}), \mathbf{x} - \mathbf{x}^* \rangle \leq 0
\]

\end{proof}

\subsection{Test for stability} \label{append_Test_stability}
In this section, we investigate the conditions that guarantee the stability of the variational equilibrium to the VI problem \eqref{VI_C}. Consider $H(\mathbf{x})$ as the block matrix defined as 

\begin{align}
H(\mathbf{x})=\begin{bmatrix}
\begin{bmatrix}
H^{1}_{1}
\end{bmatrix} & \begin{bmatrix}
H^{1}_{2}
\end{bmatrix} & \cdots & \begin{bmatrix}
H^{1}_{N}
\end{bmatrix} \\
\begin{bmatrix}
H^{2}_{1}
\end{bmatrix} & \begin{bmatrix}
H^{2}_{2}
\end{bmatrix} & \cdots & \begin{bmatrix}
H^{2}_{N}
\end{bmatrix} \\
\vdots & \vdots & \ddots & \vdots \\
\begin{bmatrix}
H^{N}_{1}
\end{bmatrix} & \begin{bmatrix}
H^{N}_{2}
\end{bmatrix} & \cdots & \begin{bmatrix}
H^{N}_{N}
\end{bmatrix}
\end{bmatrix}
\text{where} \;
H^{n}_{m}=\begin{bmatrix}
h^{n1}_{m1} & h^{n1}_{m2} & \cdots & h^{n1}_{mK} \\
h^{n2}_{m1} & h^{n2}_{m2} & \cdots & h^{n2}_{mK} \\
\vdots & \vdots & \ddots & \vdots \\
h^{nK}_{m1} & h^{nK}_{m2} & \cdots & h^{nK}_{mK}
\end{bmatrix}\;\\
\text{and elements} \;
h^{nk}_{ml}=\frac{1}{2}\frac{\partial F_{nm} }{\partial x_{ml}}
\end{align}
where "${nk}$" refers to $n^{th}$ buyer and  $k^{th}$ resource

  \begin{theorem}\label{thm_test_stability}
        If matrix $ \left[H(\mathbf{x})+H(\mathbf{x})^{T}\right] \prec 0$ then  $F(\mathbf{x})$ is strictly monotone  and  equilibrium is unique and stable 
  \end{theorem}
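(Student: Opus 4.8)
The plan is to recognize $H(\mathbf{x})$ as one half of the Jacobian of the map $F$ and then to invoke the standard characterization of monotonicity through the definiteness of the Jacobian's symmetric part, being careful about the sign convention used throughout this paper. First I would note that the $((n,k),(m,l))$ entry $h^{nk}_{ml}=\tfrac12\,\partial F_{nk}/\partial x_{ml}$ is exactly one half of the Jacobian $J_F(\mathbf{x})$ of the (continuously differentiable, since the $U_n$ are smooth) map $F$ defined in \eqref{VI_F}. Hence $H(\mathbf{x})=\tfrac12 J_F(\mathbf{x})$ and $H(\mathbf{x})+H(\mathbf{x})^{T}=\tfrac12\bigl(J_F(\mathbf{x})+J_F(\mathbf{x})^{T}\bigr)$ is precisely the symmetric part of the Jacobian. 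The hypothesis $H(\mathbf{x})+H(\mathbf{x})^{T}\prec 0$, read as holding at every $\mathbf{x}\in\mathcal{C}$, therefore says that the symmetric part of $J_F$ is negative definite on $\mathcal{C}$.

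Next I would fix two distinct points $\mathbf{x},\mathbf{x}'\in\mathcal{C}$ and apply the fundamental theorem of calculus along the segment $\mathbf{x}_t=\mathbf{x}'+t(\mathbf{x}-\mathbf{x}')$, which stays in $\mathcal{C}$ by convexity of the set \eqref{VI_C}. Writing
\[
F(\mathbf{x})-F(\mathbf{x}') = \left(\int_0^1 J_F(\mathbf{x}_t)\,dt\right)(\mathbf{x}-\mathbf{x}')
\]
and pairing with $\mathbf{x}-\mathbf{x}'$ gives
\[
\langle F(\mathbf{x})-F(\mathbf{x}'),\,\mathbf{x}-\mathbf{x}'\rangle = \int_0^1 (\mathbf{x}-\mathbf{x}')^{T} J_F(\mathbf{x}_t)(\mathbf{x}-\mathbf{x}')\,dt.
\]
Since only the symmetric part of a matrix contributes to a quadratic form, the integrand equals $(\mathbf{x}-\mathbf{x}')^{T}\bigl[H(\mathbf{x}_t)+H(\mathbf{x}_t)^{T}\bigr](\mathbf{x}-\mathbf{x}')$, which is strictly negative for every $t\in[0,1]$ because $\mathbf{x}\neq\mathbf{x}'$ and the symmetric part is negative definite. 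A continuous integrand that is strictly negative everywhere integrates to a strictly negative value, so $\langle F(\mathbf{x})-F(\mathbf{x}'),\,\mathbf{x}-\mathbf{x}'\rangle<0$, which is exactly strict monotonicity in the sense of Definition~\ref{defn_monot}. Having established strict monotonicity (and continuity), $F$ satisfies Assumption~\ref{assum_varri}, so uniqueness and global stability of the VE follow immediately from the earlier proposition guaranteeing these properties under that assumption.

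The point requiring the most care is the sign convention: because the VI here is posed with the maximization-type condition $\langle F(\mathbf{x}^*),\mathbf{x}-\mathbf{x}^*\rangle\le 0$, strict monotonicity in this paper means the quadratic form is \emph{negative} rather than positive, and this is exactly why the hypothesis is the negative-definiteness $\prec 0$ rather than the usual $\succ 0$. A secondary subtlety is the implicit quantifier on $\mathbf{x}$: the definiteness must hold along the entire segment joining the two points, which is ensured once the hypothesis is interpreted as holding for all $\mathbf{x}\in\mathcal{C}$ and one uses convexity of $\mathcal{C}$. I would also flag for correctness that the stated formula for $h^{nk}_{ml}$ should read $\tfrac12\,\partial F_{nk}/\partial x_{ml}$ (the superscript index and the differentiated component of $F$ must agree) so that the block matrix $H$ genuinely coincides with half the Jacobian.
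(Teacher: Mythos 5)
Your proof is correct and follows essentially the same route as the paper's: both integrate the Jacobian of $F$ along the segment joining the two points, observe that only the symmetric part $H+H^{T}$ contributes to the resulting quadratic form, conclude strict monotonicity from its negative definiteness, and then invoke the earlier proposition to obtain uniqueness and global stability. Your write-up is in fact slightly cleaner, since you make the factor $\tfrac12$ relating $H$ to the Jacobian explicit and correctly flag the index typo ($F_{nm}$ should be $F_{nk}$) that the paper's definition of $h^{nk}_{ml}$ glosses over.
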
  

\begin{proof}
Appendix~\ref{append4}
\end{proof}
The above result provides a sufficient condition for the uniqueness and stability of the equilibrium by examining the strict monotonicity of $F(\mathbf{x})$. However, this characterization relies on the matrix $H(\mathbf{x})$ and its properties. To further our understanding, we now extend the analysis by incorporating additional structural properties of the matrix $H(\mathbf{x})$ into our framework. 
\begin{theorem}
Assume that $F(\mathbf{x})$ is continuously differentiable on $\mathcal{C}$, H is symmetric and negative semidefinite. Then there exists a real-valued concave 
function $f(\mathbf{x}): \mathcal{C} \rightarrow \mathbb{R}$ satisfying $F(\mathbf{x})=\nabla f(\mathbf{x})$,  with $\mathbf{x}^*$ the solution of VI$( F, \mathcal{C})$ also being the solution of the optimization problem:
 \begin{align}
& \underset{\mathbf{x}}{\text{Maximize}} \quad f(\mathbf{x}) \\
& \text{subject to} \quad \mathbf{x}\in \mathcal{C}
\end{align}
\end{theorem}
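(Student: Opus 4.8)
The plan is to proceed in three steps: first recover a scalar potential $f$ from $F$, then certify its concavity from the spectral hypothesis, and finally match the variational inequality with the first-order optimality condition for concave maximization. The key structural fact I would exploit is that, by construction, the block matrix $H(\mathbf{x})$ is one-half the Jacobian $JF(\mathbf{x})$ of the map $F$, so the two hypotheses on $H$ translate directly into an integrability condition and a curvature condition on $F$.

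First I would exploit the symmetry of $H$. Since $H=\tfrac{1}{2}JF$, the hypothesis that $H$ is symmetric is exactly the integrability condition $\partial F_{nk}/\partial x_{ml}=\partial F_{ml}/\partial x_{nk}$ for every pair of indices. Because the constraint set $\mathcal{C}$ in \eqref{VI_C} is convex, hence simply connected, this symmetry lets me invoke the gradient theorem (Poincar\'e lemma): fixing a base point $\mathbf{x}_0\in\mathcal{C}$ and setting
\begin{equation*}
f(\mathbf{x})=\int_{0}^{1}\left\langle F\bigl(\mathbf{x}_0+t(\mathbf{x}-\mathbf{x}_0)\bigr),\,\mathbf{x}-\mathbf{x}_0\right\rangle\,dt,
\end{equation*}
the line integral is path-independent and well-defined, the straight segment stays in $\mathcal{C}$ by convexity, and differentiation under the integral sign together with the symmetry condition yields $\nabla f(\mathbf{x})=F(\mathbf{x})$ on $\mathcal{C}$. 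Since $F$ is $C^1$, the function $f$ is $C^2$ with $\nabla^2 f=JF=2H$.

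Next I would convert the spectral hypothesis into concavity. As $H\preceq 0$ on $\mathcal{C}$ by assumption, we have $\nabla^2 f(\mathbf{x})=2H(\mathbf{x})\preceq 0$ for all $\mathbf{x}\in\mathcal{C}$; for a twice continuously differentiable function on a convex domain a negative semidefinite Hessian is equivalent to concavity, so $f$ is concave, which establishes the first assertion of the theorem. Finally I would identify the VI with the optimality condition: for a differentiable concave $f$ maximized over the convex set $\mathcal{C}$, a point $\mathbf{x}^*$ is a global maximizer if and only if $\langle\nabla f(\mathbf{x}^*),\,\mathbf{x}-\mathbf{x}^*\rangle\leq 0$ for all $\mathbf{x}\in\mathcal{C}$. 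Substituting $\nabla f=F$ gives $\langle F(\mathbf{x}^*),\,\mathbf{x}-\mathbf{x}^*\rangle\leq 0$ for all $\mathbf{x}\in\mathcal{C}$, which is precisely the defining condition of $\text{VI}(\mathcal{C},F)$; hence its solution coincides with the maximizer of $f$ over $\mathcal{C}$. Here the paper's sign convention, in which both the VI and the monotonicity definitions use $\leq 0$, is exactly what makes this a maximization (rather than minimization) potential problem, so all the signs align.

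The step I expect to be the main obstacle is the first one: asserting that $F$ is a genuine gradient field. Existence of a single-valued potential is not automatic from differentiability alone, as it hinges on both the symmetry of the Jacobian (supplied by the hypothesis on $H$) and the simple connectedness of the domain (supplied by convexity of $\mathcal{C}$). The care required is in justifying path-independence of the line integral and the differentiation under the integral sign; once $F=\nabla f$ is in hand, the concavity argument and the optimality-condition matching are routine convex-analysis bookkeeping.
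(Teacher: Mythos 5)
Your proof is correct and is essentially the standard argument behind the result the paper simply cites (Nagurney, Ch.~1, Thm.~1.1, stated there for minimization with $\langle F(\mathbf{x}^*),\mathbf{x}-\mathbf{x}^*\rangle\geq 0$ and a positive semidefinite symmetric Jacobian; the paper's version is the sign-flipped maximization form): symmetry of $JF=2H$ plus convexity of $\mathcal{C}$ gives path-independence of the line integral and hence a potential $f$ with $\nabla f=F$, negative semidefiniteness of the Hessian gives concavity, and the first-order optimality condition for maximizing a concave function over a convex set coincides with the VI. Since the paper provides no written proof of its own, your write-up correctly fills in exactly that argument, including the one step that genuinely needs care (integrability of $F$).
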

\begin{proof}
\cite{nagurney2009network}, CHAPTER 1, Theorem 1.1
\end{proof}
\begin{remark}
   In the previous theorem, we demonstrated that if the matrix \( H \) is symmetric and negative semi-definite, there exists a real-valued concave function \( f(\mathbf{x}) \). This function \( f(\mathbf{x}) \) can be interpreted as a potential function in the context of potential games \cite{MONDERER1996124}, implying the existence of a potential game whose equilibrium coincides with the equilibrium of $\text{VI}(\mathcal{C}, F)$. An example of this is the traditional Fisher market, where \( \sum_{n \in \mathcal{N}} B_n \log(U_n(\mathbf{x}_{n})) \) serves as a potential function.

A natural question arises: what can be concluded if matrix \( H \) is not (symmetric) negative semi definite but only the diagonal blocks \( \left(H^{n}_{n}\right)_{n \in \mathcal{N}} \) of matrix \( H \) are negative semi-definite? In this case, we can show that there exist functions \( f_{n}(\mathbf{x}_{n}, \mathbf{x}_{-n}) \), concave in \( \mathbf{x}_{n} \; \forall n \in \mathcal{N} \), such that \( \mathbf{x}^{*} \) is a VE corresponding to the equilibrium of $\text{VI}(\mathcal{C}, F)$. Specifically, \( \mathbf{x}^{*} \) solves $$\text{Maximize}_{\mathbf{x}_{n}} \quad f_{n}(\mathbf{x}_{n}, \mathbf{x}^{*}_{-n}) \quad \text{subject to} \quad \mathbf{x}_{n} \in \mathcal{C}(\mathbf{x}^{*}_{-n}) \quad \forall n \in \mathcal{N}.$$This indicates the existence of a coupled constraint game \cite{rosen1965existence} whose Nash equilibrium corresponds to the equilibrium of the $\text{VI}(\mathcal{C}, F)$. In the case of homogeneous utilities, this is the first claim we have proved in Corollary \ref{extension_corollary}. However, when the matrix \( H \) is neither symmetric nor the diagonal blocks \( \left(H^{n}_{n}\right)_{n \in \mathcal{N}} \) negative semidefinite, solving $\text{VI}(\mathcal{C}, F)$ still gives a competitive equilibrium. This shows that the variational inequality formulation provides a more general framework. Due to space constraints, we keep the detailed proof of our claims for future research.

\end{remark}


\begin{lemma}
If the utilities of the buyers in the market \(\mathcal{M}\) satisfy Assumption \ref{assum_varri}, then the aggregate excess demand function \(\mathbf{z}(\mathbf{p})\), given prices \(\mathbf{p} \in \mathbb{R}^{K}_{+}\), is strictly monotone. Specifically, for any \(\mathbf{p}^1, \mathbf{p}^2 \in \mathbb{R}^{K}_{+}\), the following inequality holds:
\[
\langle \mathbf{z}(\mathbf{p}^1) - \mathbf{z}(\mathbf{p}^2), \mathbf{p}^1 - \mathbf{p}^2 \rangle \leq 0.
\]\label{excess_demand_monoto}
\end{lemma}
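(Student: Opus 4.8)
The plan is to reduce the claimed monotonicity of the price-space map $\mathbf{z}(\cdot)$ to the assumed strict monotonicity of the allocation-space map $F$ (Assumption~\ref{assum_varri}), exploiting the fact that the budget constraints \eqref{feasibledemand} couple $\mathbf{p}$ and $F$ through Walras'-law-type identities. I would fix two price vectors $\mathbf{p}^1,\mathbf{p}^2 \in \mathbb{R}^K_+$ and write $\mathbf{x}^1=\mathbf{x}^*(\mathbf{p}^1)$, $\mathbf{x}^2=\mathbf{x}^*(\mathbf{p}^2)$ for the corresponding unique NE allocations, so that $\mathbf{z}(\mathbf{p}^i)=\sum_n \mathbf{x}^i_n - \mathbf{1}$.

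The first substantive step is to characterize each NE through the KKT conditions of the buyers' budget-constrained problems $Q_n(\mathbf{x}_{-n};\mathbf{p})$. Introducing a multiplier $\mu_n$ for the budget equality \eqref{feasibledemand} and multipliers $\gamma_{nk}\ge 0$ for $x_{nk}\ge 0$, stationarity reads $\nabla_{nk}U_n(\mathbf{x}^*)=\mu_n p_k-\gamma_{nk}$. Multiplying by $x^*_{nk}$, summing over $k$, and using complementary slackness together with $\sum_k p_k x^*_{nk}=B_n$ gives $\mu_n=\tfrac{1}{B_n}\sum_k \nabla_{nk}U_n(\mathbf{x}^*)x^*_{nk}$, which is strictly positive by no-saturation. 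Dividing the stationarity relation by $\mu_n$ and comparing with \eqref{VI_F} yields the two facts I will use: $F_{nk}(\mathbf{x}^*(\mathbf{p}))\le p_k$ for all $n,k$, with equality whenever $x^*_{nk}>0$ (this is exactly the KKT point of Proposition~\ref{proposition_KKT}).

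Second, I would record two identities. Feasibility of each NE gives $\sum_{n,k} x^i_{nk}\,p^i_k=\sum_n B_n=1$, while a direct computation from \eqref{VI_F} gives $\sum_k x_{nk}F_{nk}(\mathbf{x})=B_n$ for every $\mathbf{x}$, hence $\sum_{n,k} x^i_{nk}F_{nk}(\mathbf{x}^i)=1$. Expanding $\langle \mathbf{z}(\mathbf{p}^1)-\mathbf{z}(\mathbf{p}^2),\mathbf{p}^1-\mathbf{p}^2\rangle=\sum_{n,k}(x^1_{nk}-x^2_{nk})(p^1_k-p^2_k)$, the two diagonal terms collapse to $1+1=2$ by the feasibility identity, leaving the cross terms $-\sum_{n,k} x^1_{nk}p^2_k-\sum_{n,k} x^2_{nk}p^1_k$. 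Using $p^2_k\ge F_{nk}(\mathbf{x}^2)$, $p^1_k\ge F_{nk}(\mathbf{x}^1)$ and $x_{nk}\ge 0$, I bound these cross terms from below by the corresponding $F$-terms, obtaining $\langle \mathbf{z}(\mathbf{p}^1)-\mathbf{z}(\mathbf{p}^2),\mathbf{p}^1-\mathbf{p}^2\rangle \le 2-\sum_{n,k} x^1_{nk}F_{nk}(\mathbf{x}^2)-\sum_{n,k} x^2_{nk}F_{nk}(\mathbf{x}^1)$.

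Finally, I would invoke strict monotonicity of $F$: expanding $\langle F(\mathbf{x}^1)-F(\mathbf{x}^2),\mathbf{x}^1-\mathbf{x}^2\rangle\le 0$ and substituting $\sum_{n,k} x^i_{nk}F_{nk}(\mathbf{x}^i)=1$ rearranges precisely to $\sum_{n,k} x^1_{nk}F_{nk}(\mathbf{x}^2)+\sum_{n,k} x^2_{nk}F_{nk}(\mathbf{x}^1)\ge 2$. Combining with the previous bound gives $\langle \mathbf{z}(\mathbf{p}^1)-\mathbf{z}(\mathbf{p}^2),\mathbf{p}^1-\mathbf{p}^2\rangle\le 0$, as claimed. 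The main obstacle I anticipate is the first step: justifying $\mu_n>0$ so that dividing by it is legitimate and the inequality $F_{nk}\le p_k$ is valid, which rests on the no-saturation property of Assumption~\ref{assump_existence_GNE} (the gradient does not vanish on the relevant region). Everything after that is bookkeeping around the two Walras-type identities and a single application of the monotonicity hypothesis.
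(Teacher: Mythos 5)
Your proposal is correct and follows essentially the same route as the paper's proof: both characterize the two NE demands via the buyers' KKT stationarity conditions (equivalently, $F_{nk}(\mathbf{x}^*(\mathbf{p}))+\xi_{nk}=p_k$ with $\xi_{nk}\ge 0$, $\xi_{nk}x^*_{nk}=0$), pair the resulting relations against the allocation difference, discard the complementarity cross-terms by sign, and invoke the strict monotonicity of $F$ to conclude. The only difference is organizational — you collapse the diagonal terms to the constants $\sum_{n,k}x^i_{nk}p^i_k=\sum_{n,k}x^i_{nk}F_{nk}(\mathbf{x}^i)=1$ via Walras'-law identities, whereas the paper subtracts the two stationarity equations directly and keeps those terms symbolic — but the ingredients and the inequality chain are the same.
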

\begin{proof}
Appendix~\ref{append5}
\end{proof}

\begin{theorem}
   If the buyers' utilities in market $\mathcal{M}$ satisfy Assumption 2, then $\mathbf{p}^{*}$ is an equilibrium price vector if and only if
   \begin{equation}
      \langle \mathbf{z}(\mathbf{p}), \mathbf{p} - \mathbf{p}^* \rangle \leq 0 \quad \forall \mathbf{p} \in \mathcal{P}.
   \end{equation}
\end{theorem}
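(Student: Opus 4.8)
The plan is to recognize the stated inequality as the Minty (weak) formulation of the variational inequality $\text{VI}(\mathcal{P}, \mathbf{z})$ and to prove both directions by combining Walras' law with the strict monotonicity of $\mathbf{z}$ established in Lemma~\ref{excess_demand_monoto}. First I would record Walras' law on the simplex $\mathcal{P}$: since every buyer exhausts its budget, $\sum_{k} p_k x^*_{nk}(\mathbf{p}) = B_n$ by the feasible-demand constraint \eqref{feasibledemand}; summing over $n \in \mathcal{N}$ and using $\sum_n B_n = 1$ together with $\sum_k p_k = 1$ for $\mathbf{p} \in \mathcal{P}$ gives $\langle \mathbf{p}, \mathbf{z}(\mathbf{p}) \rangle = \sum_k p_k \left(\sum_n x^*_{nk}(\mathbf{p}) - 1\right) = 0$ for every $\mathbf{p} \in \mathcal{P}$. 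I would also note that condition C1 is automatic here, because the allocation $\mathbf{x}^*(\mathbf{p})$ entering $\mathbf{z}$ is by construction the (unique) Nash equilibrium of $\mathcal{G}(\mathbf{p})$, so each buyer already holds a utility-maximizing bundle at its budget.

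For the forward direction, suppose $\mathbf{p}^*$ is an equilibrium price vector. Then market clearing C2 gives $p^*_k z_k(\mathbf{p}^*) = 0$ and feasibility gives $z_k(\mathbf{p}^*) \le 0$ for all $k$. Hence for any $\mathbf{p} \in \mathcal{P}$, using $p_k \ge 0$ and Walras' law at $\mathbf{p}^*$,
\[
\langle \mathbf{z}(\mathbf{p}^*), \mathbf{p} - \mathbf{p}^* \rangle = \sum_k p_k z_k(\mathbf{p}^*) - \langle \mathbf{p}^*, \mathbf{z}(\mathbf{p}^*) \rangle = \sum_k p_k z_k(\mathbf{p}^*) \le 0,
\]
which is the strong (Stampacchia) form. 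Adding the monotonicity inequality $\langle \mathbf{z}(\mathbf{p}) - \mathbf{z}(\mathbf{p}^*), \mathbf{p} - \mathbf{p}^* \rangle \le 0$ from Lemma~\ref{excess_demand_monoto} to this yields $\langle \mathbf{z}(\mathbf{p}), \mathbf{p} - \mathbf{p}^* \rangle \le 0$, the claimed weak inequality.

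For the reverse direction, I would invoke the standard Minty argument to pass from the weak to the strong inequality. Fixing an arbitrary $\mathbf{q} \in \mathcal{P}$ and setting $\mathbf{p}_t = (1-t)\mathbf{p}^* + t\mathbf{q} \in \mathcal{P}$ for $t \in (0,1]$, the weak inequality at $\mathbf{p}_t$ reads $t \langle \mathbf{z}(\mathbf{p}_t), \mathbf{q} - \mathbf{p}^* \rangle \le 0$; dividing by $t$ and letting $t \to 0^+$, continuity of $\mathbf{z}$ gives $\langle \mathbf{z}(\mathbf{p}^*), \mathbf{q} - \mathbf{p}^* \rangle \le 0$, so $\mathbf{p}^*$ solves the strong VI. Testing the strong VI with the vertices $\mathbf{p} = \mathbf{e}_k \in \mathcal{P}$ and using Walras' law yields $z_k(\mathbf{p}^*) \le 0$ for every $k$; then $\sum_k p^*_k z_k(\mathbf{p}^*) = 0$ with $p^*_k \ge 0$ and $z_k(\mathbf{p}^*) \le 0$ forces each term to vanish, i.e. $p^*_k z_k(\mathbf{p}^*) = 0$, which is exactly C2. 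Together with the automatic C1 this certifies $(\mathbf{p}^*, \mathbf{x}^*(\mathbf{p}^*))$ as a competitive equilibrium, so $\mathbf{p}^*$ is an equilibrium price vector.

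The main obstacle is the reverse direction, specifically the justification that $\mathbf{z}$ is continuous along the segment $\mathbf{p}_t$, since the limit $t \to 0^+$ is what converts the weak inequality into the strong one. This requires that the demand function $\mathbf{x}^*(\mathbf{p})$ be continuous in $\mathbf{p}$; under Assumption~\ref{assup_stricmono_utilti} the game $\mathcal{G}(\mathbf{p})$ has a unique NE, and continuity should follow from Berge's maximum theorem applied to the best-response correspondence (or from the estimates implied by strict monotonicity). I would flag this continuity of $\mathbf{z}$ as the one ingredient needing care; the remainder reduces to Walras' law and elementary sign bookkeeping on the simplex.
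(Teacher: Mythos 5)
Your proof is correct, and it follows the same underlying strategy as the paper: establish monotonicity of the excess demand via Lemma~\ref{excess_demand_monoto}, then invoke the variational-inequality characterization of equilibrium prices. The difference is that the paper stops there and outsources the entire characterization to Theorem~3.1 of Dafermos (1990), whereas you supply the content of that citation yourself: Walras' law on the simplex, the passage from the Stampacchia to the Minty form via monotonicity (forward direction), and the Minty-to-Stampacchia limit argument plus vertex tests to recover condition C2 (reverse direction). Your version is therefore self-contained where the paper's is not, and it makes explicit the one genuine hypothesis the citation quietly absorbs, namely (hemi)continuity of the demand function $\mathbf{x}^*(\mathbf{p})$ along segments in $\mathcal{P}$, which you correctly flag as following from uniqueness of the Nash equilibrium under Assumption~\ref{assup_stricmono_utilti} together with a maximum-theorem or monotonicity-stability argument (with the usual caveat that demand may be ill-behaved where some $p_k \to 0$, an issue the paper does not address either). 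One minor bookkeeping point: you use $z_k(\mathbf{p}^*) \le 0$ as part of the definition of an equilibrium price vector in the forward direction; the paper's condition C2 only states the complementarity identity, so you should note that feasibility of the equilibrium allocation is being read off from the verbal statement of C2 rather than its displayed formula.
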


\begin{proof}
Given that the buyers' utilities in market $\mathcal{M}$ satisfy Assumption 2, Lemma~\ref{excess_demand_monoto} ensures that:
\[
\langle \mathbf{z}(\mathbf{p}^1) - \mathbf{z}(\mathbf{p}^2), \mathbf{p}^1 - \mathbf{p}^2 \rangle \leq 0.
\] Then the claim follows from Theorem 3.1 \cite{dafermos1990exchange}.
\end{proof}

\section{Decentralized Learning of Competitive Equilibrium}
In the previous section, we studied the structural properties of VE within GNEP in the context of market equilibrium. Building on these properties, we now propose two learning algorithms that enable buyers to reach a CE in a decentralized manner. 

\subsection{Two-time scale stochastic approximation-based \textit{tâtonnement}}
Our first proposed algorithm is based on the \textit{tâtonnement} approach, where an auctioneer initially sets prices and buyers adjust their demands accordingly. The auctioneer then adapts prices in response to the observed demand, increasing prices when demand exceeds supply and decreasing them vice versa.

In Lemma \ref{excess_demand_monoto} of the previous section, we demonstrated that the excess demand function is monotone in price and stable at the equilibrium price. This monotonic behavior could have been sufficient to show the convergence of the \textit{tâtonnement} process:
\begin{equation}
\tag{Step\ref{Step_taton}-Algorithm \ref{algo1}}
\mathbf{p}(t+1) \leftarrow \Pi_{\mathcal{P}} \left[ \mathbf{p}(t) + \beta_{t}~\mathbf{z}(\mathbf{p}(t)) \right]
\end{equation}
provided that the game $\mathcal{G}(\mathbf{p}(t))$ is at equilibrium at each round. However, computing the NE of the game $\mathcal{G}(\mathbf{p}(t))$ in response to the prices set by the auctioneer requires an additional algorithm. Fortunately, since the game $\mathcal{G}(\mathbf{p}(t))$ is monotone, it can be solved using existing methods, such as gradient ascent \cite{Mertikopoulos2019}. Despite its theoretical soundness, this approach can be computationally intensive due to the nested loop structure: an inner loop computes the NE of $\mathcal{G}(\mathbf{p}(t))$, while the outer loop involves price adjustments by the auctioneer. To address this challenge, we propose a two-time-scale stochastic approximation-based learning algorithm. In this approach, buyers do not need to reach NE in response to each price update. Instead, they follow a gradient ascent direction with larger steps $\alpha_{t}$:
\begin{equation}
\tag{Step\ref{step2}-Algorithm\ref{algo1}}
x_{nk}(t+1) \leftarrow \left[x_{nk}(t) + \alpha_{t} \left(F_{nk}(\mathbf{x}(t)) - p_{k}(t-1)\right)\right]^{+}
\end{equation}
whereas Auctioneer adjusts prices with smaller steps $\beta_{t}$, facilitating a more efficient adjustment process and reducing the computational burden on buyers. If the steps size follow the conditions described in Algorithm 1, the algorithm converges to VE of $\text{VI}(\mathcal{C}, F)$ for details see \cite{two_time_ong2023two,two_time_bistritz2021online}

\begin{algorithm} \caption{Two-time scale stochastic approximation-based \textit{tâtonnement}}\label{algo1}
\begin{algorithmic}[1]
\Require $
  \sum_{t=0}^{\infty} \alpha_t = \sum_{t=0}^{\infty} \beta_{t}  = \infty,
  \sum_{n=0}^{\infty} \alpha_t^2 < \infty,\sum_{n=0}^{\infty}  \beta_{t}^2 < \infty,  \frac{\beta_{t}}{\alpha_t} \rightarrow 0 \quad \text{as} \quad t \rightarrow \infty.
  $
\Repeat { $t=1,2,\hdots,$}
\ForEach { $n \in \mathcal{N}$}
\ForEach{resource $k \in \mathcal{K}$}
\State Play $x_{nk}(t+1)\leftarrow  \left[x_{nk}(t) + \alpha_{t}\left(F_{nk}(\mathbf{x}(t))-p_{k}(t)\right)\right]^{+}$ \label{step2}
\EndFor
\EndFor
\State Auctioneer observes the excess demand for each resource $k\in\mathcal{K}$
$z_{k}(\mathbf{p}(t))=\left(\sum_{n}{x}^{*}_{nk}(\mathbf{p}(t))-1\right)$
\State set prices $\mathbf{p}(t+1)\leftarrow\Pi_{\mathcal{P}}\left[\mathbf{p}(t)+\beta_{t}\mathbf{z}(\mathbf{p}(t))\right]$ \label{Step_taton}
\Until {$\left \| (\mathbf{x}(t)-\mathbf{x}(t-1) \right \|\leq \epsilon$}
\end{algorithmic}
\end{algorithm}

\subsection{Trading post mechanism based Learning Algorithm }\label{sec_learning}
In the trading post mechanism, each buyer \(n\in \mathcal{N}\) places a bid \(b_{nk}\) on each type of resource \(k \in \mathcal{K}\). Once all buyers have placed their bids, the price of each resource is determined by the sum of the bids: \(p_k = \sum_{n} b_{nk}\). Each buyer \(n\) is then allocated the resource of type \(k\) in proportion to their bid. The allocation \(x_{nk}\) is given by:

\begin{equation}
x_{nk} =
\begin{cases}
\frac{b_{nk}}{p_{k}} & \text{if } b_{nk} > 0 \\ 
0 & \text{if } b_{nk} = 0 
\end{cases}
\end{equation} We assume that the buyers are price takers, and they request different amounts of the resources by submitting their bids over the resources. The auctioneer announces the prices of the resources and allocates them according to the trading post mechanism. If all buyers are satisfied with the allocation and prices announced by the auctioneer, the mechanism has reached equilibrium. Otherwise, buyers adjust their bids and resubmit them to the auctioneer. This introduces the challenge of bid dynamics: how do buyers adjust their bids to reach an equilibrium via the trading post mechanism? In this section, we devise bid updating learning schemes, which allow the buyers to reach the desired equilibrium using the trading post mechanism.
\begin{algorithm} \caption{Trading post mechanism based Learning Algorithm}\label{algo2}
\begin{algorithmic}[1]
\Require $\sum_{n=0}^{+\infty}\alpha_{t}=+\infty,\alpha_{t}\rightarrow 0 \text{ as } t\rightarrow +\infty$
\Repeat { $t=1,2,\hdots,$}
\ForEach { $n\in \mathcal{N}$}
\ForEach{resource $k \in \mathcal{K}$}
\State Play $b_{nk}(t+1)\leftarrow b_{nk}(t) + \alpha_{t} b_{nk}(t)\left(F_{nk}(\mathbf{x}(t))-p_k(t)\right)$ \label{step2_replic} 
\EndFor
\EndFor
\ForEach{$k \in \mathcal{K}$}
\State $p_{k}(t+1)\leftarrow \sum_{n\in\mathcal{N}} b_{nk}(t+1)$ \label{step_allo}
\ForEach{$n \in \mathcal{N}$}
\State $x_{nk}(t+1)\leftarrow\frac{b_{nk}(t+1)}{p_{k}(t+1)}$
\EndFor
\EndFor
\Until {$\left \| (\mathbf{p}(t)-\mathbf{p}(t-1) \right \|\leq \epsilon$}
\end{algorithmic}
\end{algorithm}In Algorithm~2 
the buyers adjust the bids \begin{equation}
\tag{Step\ref{step2_replic}-Algorithm\ref{algo2}}
   b_{nk}(t+1)\leftarrow b_{nk}(t) + \alpha_{t} b_{nk}(t)\left(F_{nk}(\mathbf{x}(t))-p_k(t)\right) 
\end{equation}
while resources are allocated  according to 
\begin{equation}
\tag{Step\ref{step_allo}-Algorithm\ref{algo2}}
    x_{nk}(t+1)\leftarrow\frac{b_{nk}(t+1)}{\sum_{n}b_{nk}(t+1)} 
\end{equation}
In Algorithm~2, Step \ref{step2_replic} and Step \ref{step_allo} collectively resemble the discrete replicator dynamics  \cite{replicator_SorinSylvain}\cite{replicator_falniowski2024discrete}, a concept from
evolutionary game theory \cite{41dcaf62-09fc-313d-abda-f1e5a405babf}. The stability of associated replicator dynamics can be ensured if the VE is stable. Leveraging this relationship, we demonstrate that the Algorithm~2 converges to the VE.

\begin{theorem}
If the function \( F \) in the problem \(\text{VI}(\mathcal{C}, F)\) satisfies Assumption~\ref{assum_varri}, then Algorithm \ref{algo2} converges to the unique VE of the problem \(\text{VI}(\mathcal{C}, F)\).
\end{theorem}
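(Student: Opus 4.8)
The plan is to recognize Algorithm~\ref{algo2} as a stochastic-approximation (Euler) discretization of a continuous-time replicator dynamics, and then to use the global variational stability of the VE (established earlier under Assumption~\ref{assum_varri}) as a Lyapunov function. Because the step sizes satisfy the Robbins--Monro conditions $\sum_t \alpha_t = \infty$, $\alpha_t \to 0$, the ODE method of stochastic approximation guarantees that the discrete trajectory $\mathbf{x}(t)$ asymptotically tracks the flow of an associated mean-field ODE; the whole argument therefore reduces to identifying this ODE and proving that its trajectories converge to the unique VE $\mathbf{x}^*$, whose uniqueness is already supplied by the strict monotonicity of $F$.

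First I would derive the limiting ODE. Writing the bid update as $\dot b_{nk} = b_{nk}(F_{nk}(\mathbf{x}) - p_k)$ with $p_k = \sum_n b_{nk}$ and $x_{nk} = b_{nk}/p_k$, a direct differentiation of $x_{nk}=b_{nk}/p_k$ using $\dot p_k = \sum_m b_{mk}(F_{mk}-p_k)$ yields, after the $p_k$ terms cancel,
\[
\dot x_{nk} = x_{nk}\big(F_{nk}(\mathbf{x}) - \bar F_k(\mathbf{x})\big), \qquad \bar F_k(\mathbf{x}) := \sum_{m\in\mathcal{N}} x_{mk}F_{mk}(\mathbf{x}).
\]
Since $\sum_n x_{nk} = \sum_n b_{nk}/p_k = 1$ is preserved for every resource $k$, this is exactly a per-resource replicator dynamics, in which each good $k$ plays the role of a population and the buyers' shares $(x_{nk})_n$ live on the simplex. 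Comparing its interior rest points with the stationarity condition \eqref{kkt_comb:1} (where $F_{nk}(\mathbf{x}^*)=p_k^*$ on the support) identifies the rest points with the VE characterized in Proposition~\ref{proposition_KKT}.

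The heart of the proof is the Lyapunov step. I would take the sum of per-resource Kullback--Leibler divergences
\[
H(\mathbf{x}) = \sum_{n\in\mathcal{N}}\sum_{k\in\mathcal{K}} x_{nk}^{*}\log\frac{x_{nk}^{*}}{x_{nk}},
\]
which is nonnegative and vanishes only at $\mathbf{x}=\mathbf{x}^*$. Differentiating $H$ along the replicator flow and using $\sum_n x_{nk}^{*}=1$ gives $\dot H = \langle F(\mathbf{x}), \mathbf{x}-\mathbf{x}^*\rangle$. By the global variational stability of the VE proved earlier (a consequence of the strict monotonicity of $F$ in Assumption~\ref{assum_varri}), this inner product is $\le 0$ and equals $0$ only at $\mathbf{x}^*$; hence $H$ is a strict Lyapunov function and the ODE converges globally to the unique VE. Transferring this conclusion back to the iterates through the ODE method (asymptotic pseudotrajectories and internally chain-transitive limit sets) then yields convergence of Algorithm~\ref{algo2}.

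The main obstacles I anticipate are twofold. First, the replicator flow and the KL Lyapunov function are interior objects, so I must control boundary behaviour: ensure the bids stay strictly positive and bounded, handle goods or buyers with $x_{nk}^{*}=0$ (where the corresponding $\log$ term is ill-defined), and argue invariance of the faces so that the support of $\mathbf{x}^*$ is actually reached. Second, making the stochastic-approximation reduction rigorous requires verifying boundedness of the iterates in a compact forward-invariant set and bounding the Euler discretization error, so that $\mathbf{x}(t)$ is a genuine asymptotic pseudotrajectory of the ODE. The algebraic cancellation that turns the coupled bid/price update into the clean per-resource replicator form is the key fact that makes the Lyapunov computation tractable, and verifying it carefully is where I expect most of the technical effort to lie.
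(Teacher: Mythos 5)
Your proposal follows essentially the same route as the paper's proof: both reduce the bid/price update to the per-resource replicator ODE $\dot x_{nk}=x_{nk}(F_{nk}(\mathbf{x})-\sum_m x_{mk}F_{mk}(\mathbf{x}))$, use the same KL-divergence Lyapunov function $\sum_{n,k}x^*_{nk}\log(x^*_{nk}/x_{nk})$ whose derivative equals $\langle F(\mathbf{x}),\mathbf{x}-\mathbf{x}^*\rangle\le 0$ by variational stability, and then transfer convergence to the discrete iterates via stochastic approximation. The boundary and boundedness issues you flag are real but are also left implicit in the paper's own argument.
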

\begin{proof}
Appendix~\ref{append6}
\end{proof}
\section{Numerical Experiments}
In this section, we present numerical experiments to validate the proposed learning algorithms. We focus on a Fisher market setting where competition among buyers is modeled using Tullock contests or rent-seeking games \cite{rentseeking}. The Tullock contest framework is widely used in economics to capture competitive interactions between multiple agents. It has also found extensive applications in the communication network literature, modeling scenarios such as competition between social media users for visibility on timelines\cite{comsnet2014}, multipath TCP network utility maximization \cite{TCP}, competition among miners in multicryptocurrency blockchain networks \cite{blockchain}, and competition between service providers in communication markets to attract users \cite{Strategic_Resource_Pricing}. In the Tullock contest framework, agents expend costly resources in an effort to win a prize, with the probability of winning determined by the contest success function (CSF). The standard form of the CSF is typically expressed as
\[
\rho(x) = \frac{(x_n)^r}{\sum_{n'} (x_{n'})^r},
\]
where $x_n\in \mathbb{R}_{+}$ denotes the effort of agent $n$, and $r$ is a parameter. For example, when $r = 1$, the model describes a lottery, while $r \to \infty$ represents an all-pay auction. In our work, we generalize this model by incorporating multiple resources into the CSF. The generalized CSF is defined as 
\begin{equation}
    U_n(\mathbf{x}_n, \mathbf{x}_{-n}) = \frac{q_n(\mathbf{x}_n)}{\sum_{m \in \mathcal{N}} q_m(\mathbf{x}_m)} \quad \text{where} \quad q_n(\mathbf{x}_n) = \sum_k a_{nk} \left(x_{nk}\right)^{\rho_{nk}}\label{numerical_utiltiy}
\end{equation}
where $q_n(\mathbf{x}_n)$ represents the effort function for each agent $n$. For each buyer \(n\) and each good \(k\), the parameters satisfy \(0 < \rho_{nk} < 1\) and \(0 < a_{nk} < 1\) such that $\sum_{k}a_{nk}=1$ for each $n\in\mathcal{N}$. Given fixed resource prices, the multi-resource Tullock rent-seeking game, as defined by the utility functions in \eqref{numerical_utiltiy}, is strictly monotone and has a unique, stable variational equilibrium (see Theorem 1 \cite{Strategic_Resource_Pricing}). Moreover, the computation of the pseudo-Jacobian in \ref{append_Test_stability} confirms that $F(\mathbf{X})$, as described in \eqref{VI_F}, is strictly monotone.
\par  For numerical simulations, we consider a market scenario involving five buyers and three resources, with parameters \(\rho\) and \(a\) randomly generated. The competitive equilibrium is computed using Algorithms 1 and 2. In both algorithms, we employed step sizes defined as $\alpha_t=\frac{1}{(t+1)^{0.6}}$ and $\beta_t=\frac{1}{(t+1)^{0.9}}$ for Algorithm 1.

\newpage
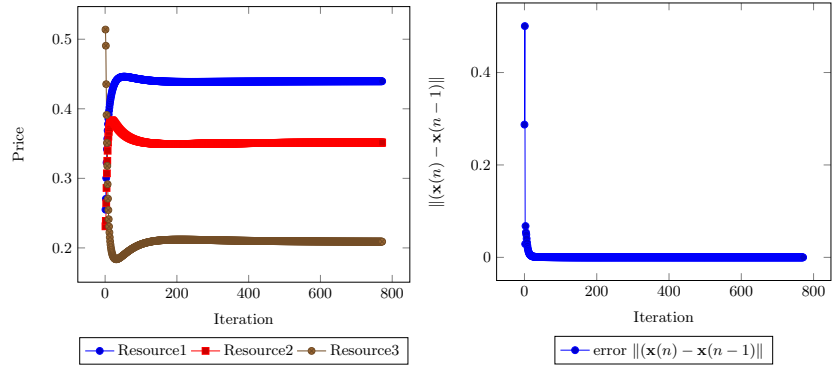
\begin{figure}
\centering
\vspace{-0.5cm}
\resizebox{0.45\textwidth}{!}
{
\begin{tikzpicture}
    \begin{axis}[
      xlabel={Iteration},
      ylabel={Price},
      legend style={at={(0.5,-0.2)}, anchor=north,legend columns=-1},
    ]

   \addplot table [x index=0, y index=1, col sep=comma] {./DATA/price_TTS_data.csv}; 
   \addplot table [x index=0, y index=2, col sep=comma] {./DATA/price_TTS_data.csv};
   \addplot table [x index=0, y index=3, col sep=comma] {./DATA/price_TTS_data.csv};
    \addlegendentry{Resource1}\addlegendentry{Resource2}\addlegendentry{Resource3}
       \end{axis}
  \end{tikzpicture}
 } 
\resizebox{0.45\textwidth}{!}
{\begin{tikzpicture}
    \begin{axis}[
      xlabel={Iteration},
      ylabel={$\left \|  (\mathbf{x}(n)-\mathbf{x}(n-1) \right \|$},
      legend style={at={(0.5,-0.2)}, anchor=north,legend columns=-1},
    ]

   \addplot table [x index=0, y index=1, col sep=comma] {./DATA/err_TTS_data.csv}; 
    \addlegendentry{error~$\left \| (\mathbf{x}(n)-\mathbf{x}(n-1) \right \|$}
    
    \end{axis}
  \end{tikzpicture}
  }
\put(-320,145){(\ref{fig:convergence_comparison_algo1}.a)}
\put(-140,145){(\ref{fig:convergence_comparison_algo1}.b)}
\caption{(a-b) Convergence two-time scale stochastic approximation-based \textit{tâtonnement}}
\label{fig:convergence_comparison_algo1}
  
    
\end{figure}
Figure~\ref{fig:convergence_comparison_algo1} (a-b) illustrates the convergence behavior of the two-time scale stochastic approximation-based \textit{tâtonnement} method for a representative instance, while Figure~\ref{fig:convergence_comparison_algo2} (a-b) describes the convergence behavior of the trading post mechanism based learning algorithm. 
\begin{figure}
\centering
\resizebox{0.45\textwidth}{!}
{
\begin{tikzpicture}
    \begin{axis}[
      xlabel={Iteration},
      ylabel={Price},
      legend style={at={(0.5,-0.2)}, anchor=north,legend columns=-1},
    ]

   \addplot table [x index=0, y index=1, col sep=comma] {./DATA/price_TP_data.csv}; 
   \addplot table [x index=0, y index=2, col sep=comma] {./DATA/price_TP_data.csv};
   \addplot table [x index=0, y index=3, col sep=comma] {./DATA/price_TP_data.csv};
    \addlegendentry{Resource1}\addlegendentry{Resource2}\addlegendentry{Resource3}
       \end{axis}
  \end{tikzpicture}
 }  
\resizebox{0.45\textwidth}{!}
{
      
    \begin{tikzpicture}
    \begin{axis}[
      xlabel={Iteration},
      ylabel={$\left \|  (\mathbf{x}(n)-\mathbf{x}(n-1) \right \|$},
      legend style={at={(0.5,-0.2)}, anchor=north,legend columns=-1},
    ]

   \addplot table [x index=0, y index=1, col sep=comma] {./DATA/err_TP_data.csv}; 
    \addlegendentry{error~$\left \| (\mathbf{x}(n)-\mathbf{x}(n-1) \right \|$}
    
    \end{axis}
  \end{tikzpicture}
  }
\put(-320,145){(\ref{fig:convergence_comparison_algo2}.a)}
\put(-140,145){(\ref{fig:convergence_comparison_algo2}.b)}
\caption{(a-b) Convergence of trading post mechanism based learning algorithm}
\label{fig:convergence_comparison_algo2}
  
    
\end{figure}
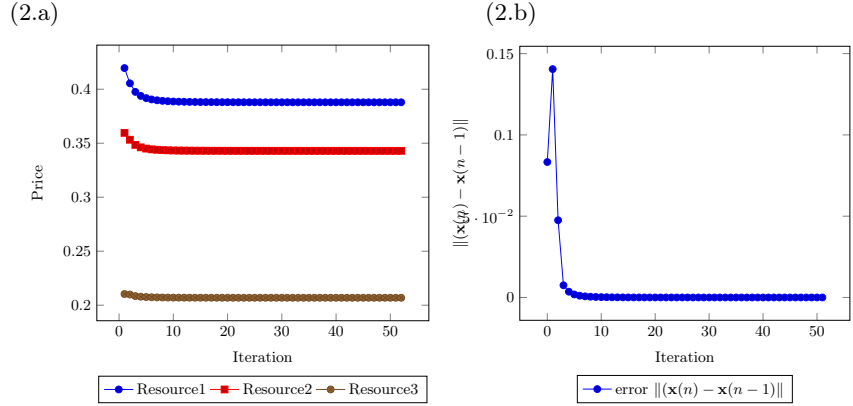
\section{Conclusion}
In this work, we have introduced a novel variational inequality formulation for competitive equilibrium problems within the generalized Fisher market model. Unlike traditional Fisher markets, where buyers' utilities depend solely on their own resource allocations, our formulation considers dependencies on the allocations of their competitors. Our proposed variational inequality framework provides a broader framework for computing competitive equilibrium in generalized Fisher markets, overcoming the constraint of homogeneous buyer utilities that is typical in traditional Fisher markets. We have examined various structural properties of this formulation, suggesting avenues for future extensions. In addition, we have developed two decentralized learning algorithms designed to facilitate buyers to achieve competitive equilibrium in a decentralized manner. Moving forward, our future research aims to explore scenarios involving buyers with unknown utility functions.


\bibliographystyle{splncs04}
\bibliography{sample-bibliography}

\appendix
\section{Appendix}
\subsection{Proof of Proposition\ref{proposition_KKT}}\label{append1}
 For each buyer $n$, the Lagrangian function of their decision problem at the generalized Nash equilibrium of the game $\widehat{\mathcal{G}}$ is given by: 
\begin{equation}
    \mathcal{L}_{n}\left ( \mathbf{x}_n,\mathbf{x}^{*}_{-n},\mu_{n},\xi_n \right )=U_{n}(\mathbf{x}_{n},\mathbf{x}_{-n}^{*})+\mu_{n}\left ( \sum_{k\in\mathcal{K}} p_{k}^{*}x_{nk}-B_n \right )+\sum_{k\in\mathcal{K}}\xi_{nk}x_{nk}
\end{equation}
KKT conditions for each buyer $n$ at a GNE $\mathbf{x}^*$
\begin{subequations}
\begin{align}
\text{(Stationarity)}\;& \left[\displaystyle \nabla _{n} \mathcal{L}_{n}\left ( \mathbf{x}_n,\mathbf{x}_{-n}^{*},\mu_{n}^{*},\xi_n^{*} \right )\right]_{\mathbf{x}_{n}=\mathbf{x}_{n}^{*}}=0\label{kkt:1}\\
\text{(Complementary slackness)}\;&\mu_n^{*} \left( \sum_{k\in\mathcal{K}}p_{k}^{*}x^{*}_{nk}=B_n\right)=0, \xi_{nk}^{*}x^{*}_{nk}=0,\; \forall k\in\mathcal{K}\label{kkt:2}\\
\text{(Primal Feasibility)}\;&\left( \sum_{k\in\mathcal{K}}p_{k}^{*}x^{*}_{nk}=B_n\right)=0\\
\text{(Dual Feasibility)}\;&\mu^{*}_{n}\geq 0,\xi^{*}_{nk}\geq 0,\; \forall k\in\mathcal{K}\label{kkt:4}
\end{align}\label{KKT_agent}
\end{subequations}

Consider the stationary condition \eqref{kkt:1}
\begin{equation}
    \left[\displaystyle \nabla_{nk} U_{n}(\mathbf{x}_n,\mathbf{x}_{-n}^{*})\right]_{\mathbf{x}_{n}=\mathbf{x}_{n}^{*}}-\mu_{n}^{*}p^{*}_{k} +\xi^{*}_{nk}=0\; \forall k\in\mathcal{K}\label{Stationarity}
 \end{equation}
Multiplying by $x^{*}_{nk}$ on both side
\begin{equation}
    \left[\displaystyle \nabla_{nk} U_{n}(\mathbf{x}_n,\mathbf{x}_{-n}^{*})\right]_{\mathbf{x}_{n}=\mathbf{x}_{n}^{*}}x^{*}_{nk}-\mu_{n}^{*}p^{*}_{k}x_{nk}^{*}+\xi^{*}_{nk}x^{*}_{nk}=0\label{before_sum}
\end{equation}
 summing over $k\in \mathcal{K}$
\begin{equation}
   \sum_{k\in\mathcal{K}}  \left[\displaystyle \nabla_{nk} U_{n}(\mathbf{x}_n,\mathbf{x}_{-n}^{*})\right]_{\mathbf{x}_{n}=\mathbf{x}_{n}^{*}}x^{*}_{nk}-\mu^{*}_{n}B_{n}=0 \label{after_sum}
\end{equation}
Where \eqref{after_sum} followed from \eqref{before_sum} as $\sum_{k}p^{*}_{k}x_{nk}^{*}=B_n$ and $\xi^{*}_{nk}x^{*}_{nk}=0$

\begin{equation}
  \mu^{*}_{n} =\frac{1}{B_n}\left[\displaystyle \nabla_{nk} U_{n}(\mathbf{x}_n,\mathbf{x}_{-n}^{*})\right]_{\mathbf{x}_{n}=\mathbf{x}_{n}^{*}}x^{*}_{nk}\label{mu_value}
\end{equation}
Replacing value of $ \mu^{*}_{n}$ in \eqref{Stationarity} from  \eqref{mu_value} and writing $\frac{\xi^{*}_{nk}}{ \mu^{*}_{n}}=\gamma^*_{nk}$

\begin{equation}
    \frac{B_n\left[\displaystyle \nabla_{nk} U_{n}(\mathbf{x}_n,\mathbf{x}_{-n}^{*})\right]_{\mathbf{x}_{n}=\mathbf{x}_{n}^{*}}}{\sum_{k'\in\mathcal{K}}\left[\displaystyle \nabla_{nk'} U_{n}(\mathbf{x}_n,\mathbf{x}_{-n}^{*})\right]_{\mathbf{x}_{n}=\mathbf{x}_{n}^{*}}x^{*}_{nk'}}-p_{k}^{*}+\gamma^{*}_{nk}=0
\end{equation}
Considering all conditions for each \( n \in \mathcal{N} \) and the decision problem of the Auctioneer, we derive the combined system of KKT conditions as presented in equation \eqref{KKT_comb}.
\subsection{Proof of Theorem 2}
If Assumption \ref{assum_varri} holds, then the first claim follows from Theorem 1.6 \cite{nagurney2009network}. Consider the following KKT system of the Variational Inequality Problem  $\text{VI}(\mathcal{C}, \mathbf{F})$. The point \( \mathbf{x}_{VE}^* \) is termed a Variational equilibrium  if it satisfies the following KKT conditions:

\begin{subequations}

\begin{align}
&\text{For each buyer $n$ in }\mathcal{N}\nonumber\\
&\text{(Stationarity)}\; \left[ \frac{B_n\displaystyle \nabla_{nk} U_{n}(\mathbf{x})}{\sum_{k'\in\mathcal{K}}\displaystyle \nabla_{nk'} U_{n}(\mathbf{x})x_{nk'}}\right]_{\mathbf{x}=\mathbf{x}^{*}_{VE}}-\lambda_{k}^{*}+\nu^{*}_{nk}=0,\;\forall k \in \mathcal{K}\label{kkt_VE:1}\\
&\text{(Complementary slackness)}\;\lambda_{k}^{*}\left(\sum_{n\in\mathcal{N}} x_{nk}^{*}-1\right)=0, \nu_{nk}^{*}x^{*}_{nk_{VE}}=0,\; \forall k\in\mathcal{K}\label{kkt_VE:2}\\
&\text{(Dual Feasibility)}\; \lambda_{k}^{*}\geq 0,\nu^{*}_{nk}\geq 0,\; \forall k\in\mathcal{K}\label{kkt_VE:3}
 \end{align}\label{KKT_VE}
\end{subequations}
If Assumptions~\ref{assump_existence_GNE} and \ref{assup_stricmono_utilti} hold, then $\mathbf{x}^*$ is a CE allocation if and only if it satisfies the KKT conditions given by \eqref{KKT_comb}. The KKT conditions \eqref{KKT_VE} and \eqref{KKT_comb} are equivalent when $\lambda_{k}^{*} = p^*_{k}$ for all $k \in \mathcal{K}$ and $\nu^{*}_{nk} = \gamma^{*}_{nk}$ for all $n \in \mathcal{N}$ and $k \in \mathcal{K}$. Since $\mathbf{x}^{*}_{VE}$ is unique, it follows that $\mathbf{x}^{*}_{VE} =\mathbf{x}^{*}$, thereby establishing the claim.
\label{append2}
\subsection{Proof of Corollary 1}\label{append3}
From Proposition~\ref{proposition_KKT}, the pair \((\mathbf{p}^*,\mathbf{x}^*)\) is a competitive equilibrium for the market \(\mathcal{M}\) if it satisfies the system of KKT conditions \eqref{KKT_comb}. Consider the stationarity condition from the KKT conditions \eqref{kkt_comb:1}:

\begin{align}
 \left[ \frac{B_n \nabla_{nk} U_{n}(\mathbf{x}_n, \mathbf{x}_{-n}^{*})}{\sum_{k' \in \mathcal{K}} \nabla_{nk'} U_{n}(\mathbf{x}_n, \mathbf{x}_{-n}^{*}) x_{nk'}} \right]_{\mathbf{x}_{n} = \mathbf{x}_{n}^{*}} - p_{k}^{*} + \gamma^{*}_{nk} = 0, \quad \forall k \in \mathcal{K}
\end{align}

If the utility function \(U_n\) of each buyer \(n \in \mathcal{N}\) is homogeneous of degree 1, then by Euler's Homogeneous Function Theorem, we have:

\begin{align}
\sum_{k' \in \mathcal{K}} \nabla_{nk'} U_{n}(\mathbf{x}_n, \mathbf{x}_{-n}^{*}) x_{nk'} = U_{n}(\mathbf{x}_n, \mathbf{x}_{-n}^{*})
\end{align}
Substituting this result into the stationarity condition simplifies it to:
\begin{align}
\left[ \frac{B_n \nabla_{nk} U_{n}(\mathbf{x}_n, \mathbf{x}_{-n}^{*})}{U_{n}(\mathbf{x}_n, \mathbf{x}_{-n}^{*})} \right]_{\mathbf{x}_{n} = \mathbf{x}_{n}^{*}} - p_{k}^{*} + \gamma^{*}_{nk} = 0, \quad \forall k \in \mathcal{K}
\end{align}
This expression can be further simplified by recognizing that \(\frac{\nabla_{nk} U_{n}(\mathbf{x}_n, \mathbf{x}_{-n}^{*})}{U_{n}(\mathbf{x}_n, \mathbf{x}_{-n}^{*})}\) is the derivative of the logarithm of the utility function with respect to \(x_{nk}\):
\begin{align}
\left[ \nabla_{nk} B_n\left(\log U_{n}(\mathbf{x}_n, \mathbf{x}_{-n}^{*})\right) \right]_{\mathbf{x}_{n} = \mathbf{x}_{n}^{*}} - p_{k}^{*} + \gamma^{*}_{nk} = 0, \quad \forall k \in \mathcal{K}
\end{align}
Now  consider the coupled constraint game between the buyers, let us examine the decision problem for each buyer \( n \), which is formulated as follows:

\begin{subequations}\label{galeconvex}
\begin{align}
&\hspace{5cm} \underset{{\mathbf{x}_n}}{\text{Maximize :}} 
& & B_n\log(U_n(\mathbf{x}_{n},\mathbf{x}_{-n})) \label{primal}\\
& \hspace{5cm} \text{subject to} 
& & \sum_{n \in \mathcal{N}} x_{nk} \leq 1, \quad \forall k \in \mathcal{K} \label{const3}.
\end{align}
\end{subequations}
Here, \(\mathbf{x}^*\) represents a varriational equilibrium of the game if and only if, for each buyer \( n \), the best response \(\mathbf{x}^{*}_{n}\) given the strategies of the opponents \(\mathbf{x}^{*}_{-n}\) satisfies the KKT conditions. These conditions are stated as follows:
\begin{subequations}
\begin{align}
&\text{(Stationarity)}\;\left[\nabla_{nk}\left(B_n\log{U(\mathbf{x}_{n},\mathbf{x}_{-n}^{*})}\right)\right]_{\mathbf{x}_{n}^*} - \lambda_{k}^{*} + \gamma_{nk}^* = 0, \quad \forall k \in \mathcal{K} \\
&\text{(Complementary Slackness)}\; \lambda_{k}^{*} \left(\sum_{n \in \mathcal{N}} x_{nk}^{*} - 1\right) = 0, \quad \gamma_{nk}^{*} x^{*}_{nk} = 0 \\
&\text{(Dual Feasibility)}\;\lambda^*_{k} \geq 0, \quad \gamma_{nk}^{*} \geq 0, \quad \forall k \in \mathcal{K}
\end{align}
\label{KKT_agent1}
\end{subequations}
These conditions ensure that \(\mathbf{x}^*\) satisfies both the stationarity, complementary slackness, and dual feasibility requirements, thus proving the first claim.

\par Similarly, in the second scenario, where each buyer's utility depends solely on their own decision variable, the KKT conditions at the optimal solution of the EG-Program \ref{EG_prog} precisely align with the combined KKT conditions \eqref{KKT_comb}. This alignment validates the second claim.

\subsection{Proof of Theorem \ref{thm_test_stability}}\label{append4}
 If the matrix \( \left[ H(\mathbf{x}) + H(\mathbf{x})^{T} \right] \prec 0 \), then \( F(\mathbf{x}) \) is strictly monotone. The proof follows the same steps as those in Theorem 6 of \cite{rosen1965existence}.
 Consider two distinct points $\mathbf{x}^{1}$ and $\mathbf{x}^{2}$ and $\mathbf{x}(\theta)=\theta \mathbf{x}^{1}+(1-\theta)\mathbf{x}^{2}$
    \begin{equation}
        \frac{ F(\mathbf{x}(\theta))}{d\theta}=H(\mathbf{x}(\theta)) \frac{ \mathbf{x}(\theta)}{d\theta}
    \end{equation}
    \begin{equation}
        \frac{ F(\mathbf{x}(\theta))}{d\theta}=H(\mathbf{x}(\theta))(\mathbf{x}^2-\mathbf{x}^1) 
    \end{equation}
      \begin{equation}
        \left[F(\mathbf{x}^2)-F(\mathbf{x}^1)\right]= \int_{0}^{1}H(\mathbf{x}(\theta))(\mathbf{x}^2-\mathbf{x}^1) d\theta
    \end{equation}
    Multiplying on both side by $(\mathbf{x}^2-\mathbf{x}^1)$
     \begin{equation}
       (\mathbf{x}^2-\mathbf{x}^1)\left[F(\mathbf{x}^2)-F(\mathbf{x}^1)\right]=\int_{0}^{1}(\mathbf{x}^2-\mathbf{x}^1)H(\mathbf{x}(\theta))(\mathbf{x}^2-\mathbf{x}^1) d\theta
    \end{equation}
       \begin{equation}
       (\mathbf{x}^2-\mathbf{x}^1)\left[F(\mathbf{x}^2)-F(\mathbf{x}^1)\right]=\frac{1}{2} \int_{0}^{1}(\mathbf{x}^2-\mathbf{x}^1)\left[H(\mathbf{x}(\theta))+H'(\mathbf{x}(\theta))\right](\mathbf{x}^2-\mathbf{x}^1) d\theta
    \end{equation}
  Furthermore, if \( F(\mathbf{x}) \) is strictly monotone, then uniqueness and global stability follow from Proposition 3.
\subsection{Proof of Lemma 1}\label{append5}
Given prices \(\mathbf{p}^1\) and \(\mathbf{p}^2\), the demands \(\mathbf{x}^{*}(\mathbf{p}^1)\) and \(\mathbf{x}^{*}(\mathbf{p}^2)\) are NE of the game $\mathcal{G}(\mathbf{p}^1)$ and $\mathcal{G}(\mathbf{p}^2)$ repestively. Since \(\mathbf{x}^{*}(\mathbf{p}^1)\) is an NE, the corresponding KKT conditions given by \eqref{KKT_agent} must be satisfied for \(\mathbf{p}\)=\(\mathbf{p}^1\). Let us specifically consider the first-order stationarity condition \eqref{kkt:1} from these KKT conditions.

  \begin{equation}
    \frac{B_n\displaystyle \nabla_{nk}U_ n(\mathbf{x}^{*}(\mathbf{p}^1))}{\sum_{k'\in\mathcal{K}}\displaystyle \nabla_{nk'}U_n(\mathbf{x}^{*}(\mathbf{p}^1))x^{*}_{nk'}(\mathbf{p}^1)}+\xi^{1}_{nk}=p^{1}_k,\;  \forall n\in \mathcal{N}, \forall k\in \mathcal{K}\label{exces_dem1}
\end{equation}

    Similarly, for NE  $\mathbf{x}^{*}(\mathbf{p}^2)$ given prices $\mathbf{p}^2$ 
\begin{equation}
    \frac{B_n\displaystyle \nabla_{nk}U_n(\mathbf{x}^{*}(\mathbf{p}^2))}{\sum_{k'\in\mathcal{K}}\displaystyle \nabla_{nk'}U_n(\mathbf{x}^{*}(\mathbf{p}^2))x^{*}_{nk'}(\mathbf{p}^2)}+\xi^{2}_{nk}=p^{2}_k,\; \forall n\in \mathcal{N}, \forall k\in \mathcal{K} \label{exces_dem2}
\end{equation}
Subtracting  \eqref{exces_dem2} from  \eqref{exces_dem1} and multiplying by  $\left({x^{*}_{nk}}(\mathbf{p}^1)-x^{*}_{nk}(\mathbf{p}^2)\right)$ on both side 
 \begin{dmath}
        \left(\frac{B_n\displaystyle \nabla_{nk}U_n(\mathbf{x}^{*}(\mathbf{p}^1))}{\sum_{k\in\mathcal{K}}\displaystyle \nabla_{nk}U_n(\mathbf{x}^{*}(\mathbf{p}^1))x^{*}_{nk}(\mathbf{p}^1)}- \frac{B_n\displaystyle \nabla_{nk}U_n(\mathbf{x}^{*}(\mathbf{p}^2))}{\sum_{k\in\mathcal{K}}\displaystyle \nabla_{nk}U_n(\mathbf{x}^{*}(\mathbf{p}^2))x^{*}_{nk}(\mathbf{p}^2)}\right)\left({x^{*}_{nk}}(\mathbf{p}^1)-x^{*}_{nk}(\mathbf{p}^2)\right)
+\left(\xi^{1}_{nk}-\xi^{1}_{nk}\right)\left({x^{*}_{nk}}(\mathbf{p}^1)-x^{*}_{nk}(\mathbf{p}^2)\right)=\left(p^{1}_k-p^{1}_k\right)\left({x^{*}_{nk}}(\mathbf{p}^1)-x^{*}_{nk}(\mathbf{p}^2)\right)
    \end{dmath}
Now, summing over for all $k\in \mathcal{K}$ and for all $n\in \mathcal{N}$
\begin{dmath}
       \sum_{n}\sum_{k} \left(\frac{B_n\displaystyle \nabla_{nk}U_n(\mathbf{x}^{*}(\mathbf{p}^1))}{\sum_{k\in\mathcal{K}}\displaystyle \nabla_{nk}U_n(\mathbf{x}^{*}(\mathbf{p}^1))x^{*}_{nk}(\mathbf{p}^1)}- \frac{B_n\displaystyle \nabla_{nk}U_n(\mathbf{x}^{*}(\mathbf{p}^2))}{\sum_{k\in\mathcal{K}}\displaystyle \nabla_{nk}U_n(\mathbf{x}^{*}(\mathbf{p}^2))x^{*}_{nk}(\mathbf{p}^2)}\right)\left({x^{*}_{nk}}(\mathbf{p}^1)-x^{*}_{nk}(\mathbf{p}^2)\right)-\xi^{1}_{nk}{x^{*}_{nk}}(\mathbf{p}^2)-\xi^{2}_{nk}{x^{*}_{nk}}(\mathbf{p}^1)=  \sum_{n}\sum_{k}\left(p^{1}_k-p^{1}_k\right)\left({x^{*}_{nk}}(\mathbf{p}^1)-x^{*}_{nk}(\mathbf{p}^2)\right)
\end{dmath}
The left-hand side of the above equation is negative due to Assumption\ref{assum_varri}, and for the right-hand side 
\begin{align}
\sum_{n}\sum_{k}\left(p^{1}_k-p^{1}_k\right)\left({x^{*}_{nk}}(\mathbf{p}^1)-x^{*}_{nk}(\mathbf{p}^2)\right)   = &\sum_{k} \left(\sum_{n}{x^{*}_{nk}}(\mathbf{p}^1)-x^{*}_{nk}(\mathbf{p}^2)\right)(p^{1}_k- p^2_k)\\
 = &\sum_{k} \left(z_{k}(\mathbf{p}^2)-z_{k}(\mathbf{p}^2)\right)(p^{1}_k- p^2_k)
    \end{align}
which proves the 

\[
\langle \mathbf{z}(\mathbf{p}^1) - \mathbf{z}(\mathbf{p}^2), \mathbf{p}^1 - \mathbf{p}^2 \rangle \leq 0.
\]
\subsection{Proof of Theorem 5}\label{append6}
To show the convergence of Algorithm \ref{algo2}, we first consider the continuous version of the algorithm (as in \cite{uncoupled}) represented by an ordinary differential equation (ODE) and show that the dynamical system is Lyapunov stable. Then, applying the stochastic approximation technique from \cite{Stochastic}, we achieve the desired results. Consider the continuous version of step \eqref{step2} in Algorithm~2, represented by the following ODE:
\begin{align}
  \dot{b}_{nk} = b_{nk}(s) \left( F_{nk}(\mathbf{x}(s)) - p_k(s) \right).
\end{align}
After factoring out \( p_k(s) \) and multiplying by \( b_{nk}(s) \), we obtain:
\begin{align}
  \dot{b}_{nk} = p_k(s) \left( F_{nk}(\mathbf{x}(s)) x_{nk}(s) - b_{nk}(s) \right). \label{ode1}
\end{align}
Now consider \( x_{nk} = \frac{b_{nk}}{p_k} \). Calculating the time $(s)$ derivative of \( x_{nk} \), we get:
\begin{align}
  \dot{x}_{nk}(s) &= \frac{p_{k}(s) - b_{nk}(s)}{(p_{k}(s))^2} \dot{b}_{nk}(s) - b_{nk}(s) \sum_{m \neq n, m\in \mathcal{N}} \frac{\dot{b}_{mk}(t)}{(p_{k}(s))^2} \\
  \dot{x}_{nk}(s) &= \frac{\dot{b}_{nk}(s)}{p_{k}(s)} - \frac{b_{nk}(t)}{p_{k}(s)^2} \sum_{m} \dot{b}_{mk}(s) \\
  \dot{x}_{nk}(s) &= \frac{1}{p_k(s)} \left[ \dot{b}_{nk}(s) - x_{nk}(s) \sum_{m} \dot{b}_{mk}(s) \right].
\end{align}
Substituting \(\dot{b}_{nk}\) from \eqref{ode1}, for all \(n \in \mathcal{N}\) and \(k \in \mathcal{K}\), we get:
\begin{dmath}
  \dot{x}_{nk}(s) = x_{mk}(s) \left[ F_{nk}(\mathbf{x}(s)) - \sum_{m \in \mathcal{N}} F_{mk}(\mathbf{x}(s)) x_{mk}(s) \right],
\end{dmath}
where we used \(x_{nk}(s) \left(\sum_{n} b_{nk}(s)\right) = x_{nk}(s) \left(p_k(s)\right) = b_{nk}\).

To show the stability of the above dynamics, consider a Lyapunov function
\[
V(\mathbf{x}) = \sum_{n \in \mathcal{N}} \sum_{k \in \mathcal{K}} x^{*}_{nk} \ln{\frac{x^{*}_{nk}}{x_{nk}}}.
\]
Taking the derivative with respect to time $s$:
\begin{align}
  \dot{V}(\mathbf{x}) = \sum_{n \in \mathcal{N}} \sum_{k \in \mathcal{K}} F_{nk}(\mathbf{x})(x_{nk} - x^{*}_{nk}).
\end{align}
As \( \mathbf{x}^{*} \) is a variational equilibrium is unique and variationally stable.
\begin{align}
\sum_{n\in \mathcal{N}}\sum_{k\in \mathcal{K}}F_{nk}(\mathbf{x})(x_{nk}-x^{*}_{nk})<0\label{negt}
\end{align}
For a discrete-time setting, writing $V_{t}=V(\mathbf{x}(s))$ and then writing the first-order Taylor expansion gives
\begin{dmath}
V_{t+1}=V_{t}+\alpha_{n}\left ( \sum_{n\in \mathcal{N}}\sum_{k\in \mathcal{K}}\left (  x_{nk} -{{x}_{nk}}^{*}  \right )F_{nk}(\mathbf{x})\right )+\mathcal{O}(\alpha^2_{t})
\end{dmath} 
As in our case $F_{n}(\mathbf{x}_{n},\mathbf{x}_{-n})$ is bounded, $\mathcal{O}(\alpha^2_{t})$ is uniformly bounded by some $\frac{1}{2}M\alpha_{t}^{2}$. Now let us consider that iteration stays a bounded distance away from $\left ( \mathbf{x}^*\right )$ by some distance $c$ then from $\eqref{negt}$
\begin{dmath}
V_{t+1}\leq V_{t}-\alpha_{t}c+\mathcal{O}((\alpha_{t})^2)
\end{dmath}
Writing $V_{t}$ in terms $V_{r}$ $\forall r$ from $0\hdots t$ recursively gives  
\begin{dmath}
V_{t}\leq V_{0}-c\sum_{r=0}^{r=t}\alpha_{s}+\mathcal{O}((\alpha_t)^2) 
\end{dmath}
as $t \rightarrow \infty, V_{t} \rightarrow -\infty$ because of assumption $\sum_{t=1}^{\infty}\alpha_{t}=\infty$ and $\sum_{t=1}^{\infty}\alpha^{2}_{t}< \infty  $ which contradicts the definition of a Lyapunov function that $V_{t}>0 $. Hence $(\mathbf{x}(t))$ comes arbitrarily close to $(\mathbf{x}^{*}(t))$ infinitely often. Then convergence of $\left ( \mathbf{x}(t)\right )$ to the unique VE follows from Theorem 6.9 in \cite{{Stochastic}} which completes the proof. 
\end{document}